\definecolor{linkcolor}{rgb}{0, 0.25, 0.75}
\pgfplotsset{compat=1.17}
\def\eps{\varepsilon}
\newcommand{\CalS}{\mathcal{S}}
\newcommand{\CalM}{\mathcal{M}}
\newcommand{\mapprox}{$\frac{1}{3+\eps}$}
\newcommand{\lrp}[1]{\left( #1 \right)}
\newcommand{\lrb}[1]{\left[ #1 \right]}
\newcommand{\lrc}[1]{\left\{ #1 \right\}}
\newcommand{\lrv}[1]{\left\langle #1 \right\rangle}
\newcommand{\abs}[1]{\left\lvert #1 \right\rvert}
\newcommand{\Z}{\ensuremath{\mathbb{Z}}}
\newcommand{\R}{\ensuremath{\mathbb{R}}}
\DeclareMathOperator*{\argmin}{arg\,min}
\newcommand{\NPH}{NP-hard}
\newcommand{\APXH}{APX-hard}
\newcommand{\bigO}[2][]{\ensuremath{O_{#1}(#2)}}
\newcommand{\bigOlog}[2][]{\ensuremath{\tilde{O_{#1}}(#2)}}
\newcommand{\degree}[2][]{\mathrm{deg}_{#1}(#2)}
\newcommand{\poly}[1]{\ensuremath{\mathrm{poly}(#1)}}
\newcommand{\polylog}[1]{\ensuremath{\mathrm{polylog}(#1)}}
\newcommand{\qedsym}{\hfill$\blacksquare$}
\declaretheoremstyle[
  spacebelow=6pt,
  headfont=\itshape,
  bodyfont=\normalfont,
  notefont=\itshape, notebraces={}{},
  qed=\qedsym
]{proof}
\declaretheorem[style=plain]{theorem, lemma, corollary, claim, observation, proposition, result, question}
\declaretheorem[style=plain]{definition, conjecture, remark, problem}
\declaretheorem[numbered=no, style=proof]{proof}
\def\sse{\subseteq}
\def\sm{\setminus}
\def\calM{\mathcal{M}}
\def\calP{\mathcal{P}}
\def\calC{\mathcal{C}}
\def\calS{\mathcal{S}}
\newcommand{\email}[1]{\textsf{#1}}
\newcommand{\cpp}[1][]{C\nolinebreak\hspace{-.05em}\raisebox{.4ex}{\relsize{-3}{\textbf{+}}}\nolinebreak\hspace{-.10em}\raisebox{.4ex}{\relsize{-3}{\textbf{+}}}#1}
\newcommand{\gpp}[1][]{g\nolinebreak\hspace{-.05em}\raisebox{.4ex}{\relsize{-3}{\textbf{+}}}\nolinebreak\hspace{-.10em}\raisebox{.4ex}{\relsize{-3}{\textbf{+}}}#1}
\newcommand{\etal}{et al.~}
\newcommand{\leqnomode}{\tagsleft@true}
\newcommand{\reqnomode}{\tagsleft@false}
\crefname{lp-fig}{}{}
\crefname{lp}{LP}{LPs}
\crefname{ip}{IP}{IPs}
\crefname{definition}{Definition}{Definitions}
\crefname{claim}{Claim}{Claims}
\crefname{observation}{Observation}{Observations}
\crefname{lemma}{Lemma}{Lemmas}
\crefname{figure}{Figure}{Figures}
\newcommand{\ps}{{\small \textsf{PS}}}
\newcommand{\tsmall}{{\sc Small}}
\newcommand{\tlarge}{{\sc Large}}
\newcommand{\tdata}{{\sc Trace}}
\newcommand{\gfive}{{\sc Rmat}}
\newcommand{\algname}[1]{\textsc{#1}}
\newcommand{\stk}{\algname{Stk}}
\newcommand{\stkdp}{\algname{Stk-dp}}
\newcommand{\git}{\algname{Grdy-It}}
\newcommand{\gpait}{\algname{GPA-It}}
\newcommand{\nc}{\algname{NC}}
\newcommand{\kec}{\algname{k-EC}}
\newcommand{\stkbnh}{\algname{Stkb}}
\newcommand{\stkb}{\algname{Stkb-cc-m}}
\newcommand{\stkbccm}{\algname{Stkb-cc-m}}
\newcommand{\stkbcc}{\algname{Stkb-cc}}
\newcommand{\stkbm}{\algname{Stkb-m}}
\newcommand{\kdm}{\textsf{$k$-DM}}
\newcommand{\mwbm}{\textsf{MW$b$M}}
\newcommand{\mwm}{\textsf{MWM}}
\newcommand{\mwkm}{\textsf{MW$k$M}}
\newcommand{\primal}{\hyperlink{primal-k-djm}{\textsf{(P)}}}
\newcommand{\dual}{\hyperlink{dual-k-djm}{\textsf{(D)}}}
\title{Semi-Streaming Algorithms for Weighted $k$-Disjoint Matchings\thanks{Source code: \url{https://github.com/smferdous1/GraST}}} 
\author{
    S M Ferdous\thanks{Pacific Northwest National Laboratory. Email:~\email{sm.ferdous@pnnl.gov}. Supported by the Laboratory Directed Research and Development Program at PNNL.}
    \and 
    Bhargav Samineni\thanks{The University of Texas at Austin. Email:~\email{sbharg@utexas.edu}. Supported by the U.S. DOE Science Undergraduate Laboratory Internships (SULI) program.}
    \and
    Alex Pothen\thanks{Purdue University. Email:~\email{apothen@purdue.edu}. Supported by U.S. Department of Energy SC-0022260.}
    \and
    Mahantesh Halappanavar\thanks{Pacific Northwest National Laboratory. Email:~\email{mahantesh.halappanavar@pnnl.gov}. Supported by U.S. Department of Energy 17-SC-20-SC (ECP ExaGraph) at PNNL.}
    \and
    Bala Krishnamoorthy\thanks{Washington State University Vancouver. Email:~\email{kbala@wsu.edu}.}
}
\begin{document}
    \maketitle

    \begin{abstract}
    We design and implement two single-pass semi-streaming algorithms for the maximum weight $k$-disjoint matching (\textsf{$k$-DM}) problem. Given an integer $k$, the \textsf{$k$-DM} problem is to find $k$ pairwise edge-disjoint matchings such that the sum of the weights of the matchings is maximized. For $k \geq 2$, this problem is NP-hard. Our first algorithm is based on the primal-dual framework of a linear programming relaxation of the problem and is $\frac{1}{3+\varepsilon}$-approximate. We also develop an approximation preserving reduction from \textsf{$k$-DM} to the maximum weight $b$-matching problem. Leveraging this reduction and an existing semi-streaming $b$-matching algorithm, we design a $(\frac{1}{2+\varepsilon})(1 - \frac{1}{k+1})$-approximate semi-streaming algorithm for \textsf{$k$-DM}. For any constant $\varepsilon > 0$, both of these algorithms require $O(nk \log_{1+\varepsilon}^2 n)$ bits of space. To the best of our knowledge, this is the first study of semi-streaming algorithms for the \textsf{$k$-DM} problem.
    
   We compare our two algorithms to state-of-the-art offline algorithms on 95 real-world and synthetic test problems, including thirteen graphs generated from data center network traces. On these instances, our streaming algorithms used significantly less memory (ranging from 6$\times$ to 512$\times$ less) and were faster in runtime than the offline algorithms. Our solutions were often within 5\% of the best weights from the offline algorithms. We highlight that the existing offline algorithms run out of 1 TB of memory for most of the large instances ($>1$ billion edges), whereas our streaming algorithms can solve these problems using only 100 GB memory for $k=8$.
\end{abstract}

    \section{Introduction} \label{sec:intro}

Given an undirected graph $G = (V, E, w)$ with weights $w \colon E \to \R_{> 0}$ and an integer $k \geq 1$, the 
\textsf{$k$-Disjoint Matching (\kdm{})} problem asks for a collection of $k$ pairwise edge-disjoint matchings
that maximize the sum of the weights of matched edges.
The \kdm{} problem is a generalization of the classical \textsf{Maximum Weight Matching} (\mwm{}) problem and is closely related 
to the \textsf{Maximum Weight $b$-Matching} (\mwbm{}) problem. 
However, in contrast to these problems, \kdm{} is \NPH{} and \APXH{} already for $k \geq 2$ \cite{feige2002approximating,hanauer2022fast}. 
Prior work has primarily studied \kdm{} in computational models where space complexity is not a limiting factor in designing algorithms.
In this work, we study \kdm{} in the single-pass \emph{semi-streaming} model \cite{FeigenbaumKMSZ05,Muthukrishnan2005}, which is used to solve massive graph problems with limited memory. 
In particular, we extend existing state-of-the-art semi-streaming matching~\cite{PazS19,GhaffariW19} and $b$-matching~\cite{huang2021semi} algorithms to the \kdm{} problem.
To the best of our knowledge, these are the \emph{first} semi-streaming algorithms for the \kdm{} problem.  

In the offline unweighted setting, \kdm{} in general graphs was originally studied by Feige \etal\cite{feige2002approximating}, who 
motivated the problem by applications in scheduling %
traffic in satellite-based communication networks. 
Cockayne \etal\cite{cockayne1978linear} modeled the problem of finding a maximal assignment of jobs to people such that no person performs the 
same job on two consecutive days using unweighted \kdm{} in bipartite graphs with $k=2$. 

In the weighted setting, \kdm{} was %
recently studied by Hanauer \etal\cite{hanauer2022fast,hanauer2023dynamic} in the offline and 
dynamic computation models. This was motivated by applications in designing 
reconfigurable optical topologies for data center networks \cite{ballani2020sirius,bienkowski2021online,bienkowski2022online,mellette2017rotornet}. 
In contrast to %
static networks,  reconfigurable networks use optical switches to quickly provide direct connectivity between racks, where 
each switch essentially acts as a reconfigurable optical matching. Given a traffic matrix and $k$ optical switches, the underlying optimization problem %
becomes how to compute heavy disjoint matchings that carry a large amount of traffic for each switch, which is exactly the \kdm{} problem. 

\paragraph{Algorithmic Contributions} 
We provide a primal-dual linear programming (LP) formulation of the \kdm{} problem and use it to 
derive a $\frac{1}{3 + \varepsilon}$-approximate single-pass semi-streaming algorithm that requires $O(nk \log^2 n)$ bits of space for any constant $\varepsilon > 0$. %
Our algorithm extends the seminal \textsf{MWM} semi-streaming algorithm by Paz and Schwartzman~\cite{PazS19} by  %
maintaining $k$ stacks and employing approximate dual variables to decide which edges should be stored in those stacks. 
The post-processing phase that computes $k$ edge-disjoint matchings from the stacks is more involved here since edges in a stack 
that are not included in a matching need to be considered for inclusion in higher-numbered stacks. The primal-dual analysis of the 
approximation ratio involves two sets of dual variables here, unlike the former algorithm.

We also reduce the \kdm{} problem to the \textsf{MW$b$M} problem. 
In particular, we show that %
a modified edge coloring algorithm on any $\alpha$-approximate $b$-matching subgraph (with $b(v)=k$ for all $v \in V$) computes an 
$\alpha (1-\frac{1}{k+1})$-approximate solution for \kdm{}. 
Using the $\frac{1}{2+\eps}$-approximate semi-streaming \mwbm{} algorithm of Huang and Sellier \cite{huang2021semi}, 
we obtain a $(\frac{1}{2+\varepsilon}) (1-\frac{1}{k+1})$-approximate \kdm{} that requires $\bigO{nk \log^2 n}$ bits of space for any constant $\eps > 0$. 
This reduction, which was previously known for unweighted \kdm{}~\cite{feige2002approximating}, is not specific to the semi-streaming setting, and thus 
could be used to develop algorithms for \kdm{} in other computational models where  
$b$-matching results are known.

\paragraph{Experimental Validation} 
We implement both algorithms and compare the memory used, running time required, and the weight computed with static 
offline approximation algorithms for this problem on several real-world and synthetic graphs, and several graphs 
generated from data center network traces. %
Our results show that the streaming algorithms reduce the memory needed to compute the matchings often by two orders of magnitude 
and are also faster than offline static algorithms. 
Indeed, the latter algorithms do not terminate on all but one of the larger graphs in our test set. 
The median weights computed by the streaming algorithms are only about $5\%$ lower than the ones obtained by the static algorithms. 
Among the streaming algorithms, the primal-dual algorithm outperforms the
$b$-matching-based algorithm in memory needed and weight, and also time (except for the data center problems). 

    \section{Preliminaries} \label{sec:prelims}

\paragraph{Notation}
Consider a graph $G=(V, E, w)$ with weights $w \colon E \to \R_{> 0}$.
We denote $n \coloneqq  \abs{V}$ and $m \coloneqq \abs{E}$ throughout the paper.
For an edge $e = (u, v)$, we say that vertices $u$ and $v$ are \emph{incident} on the edge $e$. 
Given a vertex $v \in V$, we denote by $\delta(v)$ the set of edges $v$ is incident on,  and 
by $\degree{v} \coloneqq \abs{\delta(v)}$ its degree. The maximum degree of $G$ is  $\Delta \coloneqq \max_{v \in V} \degree{v}$. 
We say that two edges $e_1$ and $e_2$ are \emph{adjacent} if they share a common vertex. 
For an edge subset $H \sse E$, we let $V(H)$  denote the set of vertices incident on edges in $H$, 
and let $G[H]$ denote the subgraph \emph{induced} by $H$ (i.e., the subgraph whose edge set is $H$
and vertex set is $V(H)$). 
Likewise, we denote by  $\degree[H]{v} \coloneqq \abs{\delta(v) \cap H}$  the 
number of edges in $H$ that a vertex $v \in V$ is incident on and let $\Delta_H \coloneqq \max_{v \in V} \degree[H]{v}$. 
For a positive integer $t$, we use $[t]$ to represent the set of integers from $1$ to $t$, inclusive.  
For an integer $s \leq t$, we let $[s .. t]$ denote the set of integers from $s$ to $t$, inclusive.  

\paragraph{Matchings and $b$-Matchings} 
Given a function $b: V \to \Z_+$, a \emph{$b$-matching} in a graph $G$ is an edge subset $F \sse E$ 
such that $\abs{F \cap \delta(v)} \leq b(v)$ for all $v \in V$. The weight of a $b$-matching $F$ 
is $w(F) \coloneqq \sum_{e \in F} w(e)$, and in the \textsf{Maximum Weight $b$-Matching} (\mwbm{}) problem, 
we aim to maximize $w(F)$. When $b(v)=1$ for all $v \in V$, we obtain a matching and the \mwbm{} problem 
reduces to the \textsf{Maximum Weight Matching} (\mwm{}) problem. 

\paragraph{$k$-Disjoint Matchings}
Given an integer $k \geq 1$, a $k$-disjoint matching in $G$ is a collection of $k$ matchings $\CalM = \{M_1,\ldots, M_k\}$ 
that are pairwise edge-disjoint (i.e.,~$M_i \cap M_j = \emptyset$ for all $i, j \in [k], i \neq j$). 
Its weight is given by $w(\CalM) \coloneqq \sum_{i=1}^k w(M_i)$ and in the \textsf{$k$-Disjoint Matching (\kdm{})} problem, we aim %
to maximize $w(\calM)$. 
A $k$-disjoint matching can also be described through an edge coloring
viewpoint \cite{hanauer2022fast}. Consider a coloring function $\calC \colon E \to [k] \cup \lrc{\perp}$ that assigns edges a color from the 
palette $[k]$, or leaves them uncolored (color $\perp$). 
If $\calC$ describes a proper $k$ edge coloring (i.e., any two adjacent edges $e_1, e_2$ colored from $[k]$
satisfy $\calC(e_1) \neq \calC(e_2)$) 
then it also describes a $k$-disjoint matching. Prior work has shown that \kdm{} is \NPH{} and \APXH{} for $k \geq 2$~\cite{feige2002approximating,hanauer2022fast}. 

An LP relaxation of the \kdm{} problem and its dual is shown in \primal{} and \dual{}, respectively. For each edge $e = (u,v) \in E$ and color $c \in [k]$, 
we associate each primal variable $x(c,e)$ with the inclusion of edge $e$ in the $c^{\text{th}}$ matching, i.e., $x(c,e) = 1$ iff $e \in M_c$. 
The first constraint in \primal{} enforces that $M_c$ is a valid matching for each $c \in [k]$, while the second constraint ensures each edge $e \in E$ belongs to at most one matching. For the dual \dual{}, we define variables 
$y(c, v)$ for each color $c \in [k]$ and vertex $v \in V$ (corresponding to the first constraint in \primal{}), and 
$z(e)$ for each edge $e \in E$ (corresponding to the second constraint in \primal{}). 

\begin{figure}[t]
\begin{minipage}[t]{.4\columnwidth}
\small
\begin{maxi*}
    {}{\sum_{c \in [k]} \sum_{e \in E} w(e) x(c, e)}{}{\hypertarget{primal-k-djm}{\mathsf{(P)}}}{} %
    \addConstraint{\sum_{e \in \delta(v)} x(c,e)}{\leq 1}{\; \forall v \in V, c \in [k]}
    \addConstraint{\sum_{c \in [k]}  x(c,e)}{\leq 1}{\; \forall e \in E} 
    \addConstraint{ x(c, e)}{\geq 0}{\; \forall e \in E,  c \in [k]}
\end{maxi*}
\end{minipage}%
\begin{minipage}[t]{.6\columnwidth}
\small
\begin{mini*}
    {}{\sum_{c \in [k]}\sum_{v \in V} y(c, v) + \sum_{e \in E} z(e)}{}{\hypertarget{dual-k-djm}{\mathsf{(D)}}}{} %
    \addConstraint{y(c, u) + y(c,v) + z(e)}{\geq w(e)}{\; \forall e = (u,v) \in E, c \in [k]}
    \addConstraint{y(c, v)}{\geq 0}{\; \forall v \in V, c \in [k]} 
    \addConstraint{z(e)}{\geq 0}{\; \forall e \in E}
\end{mini*}
\end{minipage}

\caption{LP Relaxation \protect\primal{} of \kdm{} and its dual \protect\dual{}.}
\label{fig:lp_formulations_kdm}
\end{figure}

\paragraph{Semi-Streaming Model} 
For semi-streaming \kdm{}, in each pass, the edges of $E$ are presented one at a time in an \emph{arbitrary order}. 
We aim to compute a $k$-disjoint matching in $G$ at the end of the algorithm, using limited memory and only a single pass. 
The semi-streaming model allows memory size for processing proportional (up to polylog factors) to the size of the memory needed to store the output.  
For \kdm{}, the final solution size is $\bigO{nk}$, and hence the memory limit is $\bigO{nk \cdot \polylog{n}} \eqqcolon \bigOlog{nk}$ bits of space. 
We assume that the ratio $W = w_{\max}/w_{\min}$ is $\poly{n}$, where $w_{\max} = \max_{e \in E} \lrc{w(e)}$ and $w_{\min} = \min_{e \in E} \lrc{w(e)}$. 
This allows for storing edge weights and their sums in $\bigO{\log n}$ bits.

    \section{Related Work} \label{sec:related}

In this section we discuss relevant related work to semi-streaming \kdm{}, including offline approximation algorithms for \kdm{} and 
results for matching problems in the semi-streaming model. For discussions on practical applications of \kdm{}, including a more in depth explanation of its relevance to reconfigurable datacenter networks, as well as practical offline approximation algorithms for the \mwm{} and \mwbm{} problems, we refer to \cref{sec:kdm-app-offline}. 

\paragraph{Offline Approximation Algorithms}
In the offline setting, Hanauer \etal~\cite{hanauer2022fast} designed six approximation algorithms for \kdm{}. 
Three of these algorithms are based on an iterative matching framework
where $k$ matchings are successively computed by running a matching algorithm and removing the matched edges from the graph.
This framework was used with the Blossom \cite{edmonds1965paths} algorithm, which computes an exact \mwm{} solutions, 
and the Greedy and Global Path \cite{maue2007engineering} algorithms, which compute
$\frac{1}{2}$-approximate \mwm{} solutions. 
They also designed a $ b$-matching-based algorithm, where a Greedy $(k-1)$-matching is first found and 
then converted in a $k$-disjoint matching using the Misra-Gries edge coloring algorithm \cite{misra1992constructive}. 
Additionally, two direct algorithms, NodeCentered and $k$-Edge Coloring, which do not use matching algorithms as a subroutine 
were also proposed. The NodeCentered algorithm assigns ratings to vertices, %
which are then processed in rating-decreasing order, and up to $k$ edges a vertex is incident on are colored with any available color in weight-decreasing order. 
A threshold $\theta \in [0,1]$ is also introduced, which avoids an overly Greedy approach by deferring the coloring of edges 
with weight less than $\theta w_{\max}$.
The $k$-Edge Coloring algorithm is an adaption of the Misra-Gries $(\Delta+1)$ edge coloring algorithm \cite{misra1992constructive} that 
is restricted to using $k$ colors and accounts for edge weights. 
The iterative GPA, $b$-matching based, NodeCentered, and $k$-Edge Coloring algorithms are shown to be \emph{at most} $\frac{1}{2}$-approximate, 
while the Blossom variant is shown to be at most $\frac{7}{9}$-approximate and the Greedy variant is $\frac{1}{2}$-approximate.

\paragraph{Matchings in the Semi-Streaming Model} 

Matching problems are an active area of research in the semi-streaming model. 
For \mwm{} in the single pass, arbitrary order stream setting, Feigenbaum \etal\cite{FeigenbaumKMSZ05} first 
gave a $\frac{1}{6}$-approximation algorithm. This was improved on by a series of papers \cite{CrouchS14, EpsteinLMS11,McGregor05, Zelke12}, 
until the current state-of-the-art result by Paz and Schwartzman \cite{PazS19} who showed that a simple local-ratio algorithm achieves a $\frac{1}{2+\varepsilon}$-approximation. 
Ghaffari and Wajc \cite{GhaffariW19} further simplified the analysis of this algorithm by giving both a primal-dual and charging-based analysis. 
This algorithm was implemented recently by Ferdous \etal\cite{Ferdous+:2024a} and it was shown to reduce memory requirements by one to two orders of magnitude over offline $\frac{1}{2}$-approximate algorithms, while being close to the best of them in run time and matching weight.
On the hardness front, 
Kapralov \cite{kapralov2021space} showed that no single-pass semi-streaming algorithm can have an approximation ratio 
better than $\frac{1}{1 + \ln 2} \approx 0.59$ in arbitrary order streams. 
In random order streams, Gamlath \etal\cite{GamlathKMS19} 
designed a $(\frac{1}{2}+c)$-approximate algorithm, where $c > 0$ is some absolute constant. 

For \mwbm{} in the single pass, arbitrary order stream setting, 
Levin and Wajc \cite{levin2021streaming} designed a $\frac{1}{3 + \varepsilon}$-approximate 
algorithm using a primal-dual framework, which was recently improved to $\frac{1}{2 + \varepsilon}$ by Huang and Sellier \cite{huang2021semi}. 
A variant of the latter algorithm requires $\bigOlog{|F_{\max}| \log_{1+\varepsilon} (\nicefrac{W}{\varepsilon})}$ bits, where $\abs{F_{\max}}$ is the size of 
a max cardinality $b$-matching in $G$.

\paragraph{Edge Colorings and Unweighted \kdm{}}
The \kdm{} problem is equivalent to 
a weighted variant of the \textsf{Edge Coloring} problem; 
in the latter, the goal is to find the \emph{chromatic index} of a graph, i.e., the minimum number of colors needed such that adjacent edges receive distinct colors.
Vizing \cite{vizing1965chromatic} showed that the chromatic index of any simple graph $G$ is in $\lrc{\Delta, \Delta+1}$, but it is \NPH{} to decide between them \cite{holyer1981np}. %
Hence, most edge coloring algorithms, like the 
$\bigO{nm}$ time Misra-Gries algorithm \cite{misra1992constructive}, construct $(\Delta+1)$-edge colorings.
The \kdm{} problem can be seen as a ``maximization'' variant of \textsf{Edge Coloring}, where given the number of colors $k$ as input, 
the goal is to find a maximum weight subgraph with chromatic index $k$. 

Using this coloring viewpoint,
Feige \etal\cite{feige2002approximating} provided several hardness results and approximation algorithms for unweighted \kdm{} in the offline setting,
which was later improved by Kami{\'n}ski and Kowalik \cite{kaminski2014beyond} for small $k$. 
Favrholdt and Nielson \cite{favrholdt2003online} additionally gave algorithms for this problem in the online setting. 
Recently, El-Hayek \etal\cite{elhayek2023dynamic} 
developed fully dynamic unweighted \kdm{} algorithms by reducing it to dynamic $b$-matching followed by edge coloring.

    \section{A Primal-Dual  Approach} \label{sec:stk}

In this section we extend the streaming algorithm of Paz and Schwartzman (henceforth, \ps{}) \cite{PazS19}, and more specifically the primal-dual interpretation of it by Ghaffari and Wajc \cite{GhaffariW19}, 
for the \textsf{MWM} problem to the  \kdm{} problem.
We begin with an intuitive description of the \ps{} algorithm; for a more formal description, see \cref{subsec:alg_streamMatch}. 

Consider the non-streaming setting first. The algorithm chooses an edge with positive weight, includes it in a stack for candidate 
matching edges, and subtracts its weight from neighboring edges. It repeats this process as long as edges with positive weights remain. 
At the end, we unwind the stack and greedily add edges in the stack to the matching. This means that once an edge is added to the matching, any 
neighboring edges in the stack cannot be added to the matching. 

To adapt the algorithm to the streaming setting, an approximate dual variable $\phi(v)$ is kept for each vertex $v$ that accumulates the weights of the edges incident on $v$ that are added to the stack. 
When an edge arrives, we subtract the sum of the $\phi(\cdot)$ variables of the endpoints of the edge from its weight. If this reduced weight is positive, it is added to the stack; otherwise, it is discarded. The rest of the algorithm proceeds as in the non-streaming setting. 
To bound the size of the stack to $O(n \log{n})$, we need one more idea, which is to add an edge $e = (u, v)$ to the stack only if its weight is greater than $(1+ \varepsilon)(\phi(u) + \phi(v))$, for a small constant $\varepsilon > 0$. This 
ensures that neighboring edges added to the stack have weights that increase exponentially in $(1 + \varepsilon)$. It can be shown that if the edge weights are polynomial in $n$, then the size of the stack is bounded as desired and that the approximation ratio becomes $\frac{1}{2 + \varepsilon}$.

We adapt this general idea to develop our algorithm for \kdm{} in \cref{alg:stk-djm}. For each color $c \in [k]$, we maintain a stack $\calS(c)$ that stores the eligible edges for the $c^{\text{th}}$ matching. 
A matching $M_c$ is then greedily computed from each stack $\calS(c)$ in the post-processing phase. 
The algorithm maintains \emph{approximate} dual variables $\phi(c, v)$ for each color $c \in [k]$ and $v \in V$, and uses $\eps > 0$ to process only sufficiently heavy edges.
For an edge $e= (u,v)$ in the stream, we iterate over the colors $c \in [k]$ to verify whether $w(e) \geq (1+\eps)\lrp {\phi(c, u) + \phi(c,v)}$. 
If the condition is not satisfied for any color, then the edge is discarded.  
Otherwise, let $\ell$ be the first color that satisfies it.  The algorithm computes a reduced weight $w'(\ell,e)$ for $e$ by subtracting the sum $\phi(\ell, u) + \phi(\ell, v)$ from its weight $w(e)$, 
pushes $e$ into $\calS(\ell)$, and increases $\phi(\ell, u)$ and $\phi(\ell, v)$ by the reduced weight $w^\prime(\ell, e)$.

In the post-processing phase, each stack $\calS(c)$ is processed in increasing order of the color $c$, and the edges in each stack are processed in reverse order in which they were added (i.e., by popping from the stack). 
For an edge $e=(u,v)$ popped from $\calS(c)$, if no earlier popped edge from $\calS(c)$ is incident on either $u$ or $v$ in $M_c$, 
then $e$ is added to $M_c$. 
Otherwise, the algorithm checks to see if $e$ can be added to a later stack $\calS(j)$ where $j > c$,
again based on the condition that $w(e) \geq (1+\eps)\lrp {\phi(j, u) + \phi(j,v)}$.
At termination, the algorithm returns a $k$-disjoint matching $\calM = \{M_1, \ldots, M_k\}$. 

\begin{algorithm}[!t]

\caption{Semi-Streaming \kdm{}} 
\label{alg:stk-djm}

\quad\; \textbf{Input:} A stream of edges $E$, an integer $k$, and a constant $\eps > 0$ \\
\indent \quad\; \textbf{Output:} A \mapprox-approximate $k$-disjoint matching $\calM$ using $\bigO{nk\log^2{n}}$ bits of space\\
\begin{minipage}[t]{0.5\linewidth}
\begin{algorithmic}[1]
\LComment{Initialization}
\State{$\forall v\in V, \forall c \in [k]: \phi(c,v) \gets 0$}

\State{$\CalS \gets \lrc{\CalS_1, \ldots, \CalS_k} $, where $\CalS(c)$ denotes stack $\CalS_c$}
\Statex{}
\LComment{Streaming Phase}
\For{$e = (u,v) \in E$}
    \For{$c \in [k]$}
        \State $\phi_c = \phi(c,u) + \phi(c,v)$
        \If{$w(e) \geq (1+\eps) \phi_c$} \label{algline:cond1}        
            \State{$w^\prime(c, e) \gets w(e) - \phi_c$} \label{algline:red_weight}
            \State{$\phi(c,u) \gets \phi(c,u) + w^\prime(c, e)$}\label{algline:phi_u_update} %
            \State{$\phi(c,v) \gets \phi(c,v) + w^\prime(c, e)$} \label{algline:phi_v_update}
            \State{{$\CalS(c)$.push($e$)}; \textbf{break}}
        \EndIf
    \EndFor
\EndFor
\end{algorithmic}
\end{minipage}
\hfill
\begin{minipage}[t]{0.49\linewidth}
\begin{algorithmic}[1]
\setcounter{ALG@line}{12}
\LComment{Post-Processing}
\State{$\forall c \in [k]:M_c \gets \emptyset$}
\For{$c \in [k]$}
    \While{$\CalS(c)$ is not empty}
        \State{$e = (u,v) \gets \CalS(c)$.pop()}
        \If{$V(M_c) \cap \{u,v\} = \emptyset$} 
            \State{$M_c \gets M_c \cup \{e\}$}
        \Else
            \For{$j \in [c+1..k]$}
                \State $\phi_j = \phi(j,u) + \phi(j,v)$
                \If{$w(e) \geq (1+\eps) \phi_j$} \label{algline:cond2}                
                    \State{$w^\prime(j, e) \gets w(e) - \phi_j$}
                    \label{algline:red_weight1}
                    \State{$\phi(j,u) \gets \phi(j,u) + w^\prime(j, e)$} \label{algline:phi_u_update1}
                    \State{$\phi(j,v) \gets \phi(j,v) +w^\prime(j, e)$} \label{algline:phi_v_update1}%
                    \State{{$\CalS(j)$.push($e$)}; \textbf{break}}
                \EndIf
            \EndFor
        \EndIf
    \EndWhile
\EndFor
\State{\Return $\calM = \lrc{M_1, \ldots, M_k}$}
\end{algorithmic}
\end{minipage}
\end{algorithm}

\subsection{Analysis of the Algorithm}
We prove the approximation ratio of \cref{alg:stk-djm} 
using the standard primal-dual framework and adapting the analysis in \cite{GhaffariW19}. We first show how to derive a 
feasible dual solution for the dual \dual{} from the $\phi(\cdot, \cdot)$ values. By weak duality, 
the resulting dual objective immediately provides an upper bound on the weight of an optimal \kdm{} solution. %
\cref{lem:dual_vertex_bound,lem:dual_edge_bound}
then show lower bounds between the value of the $k$-disjoint matching $\calM$ constructed 
by \cref{alg:stk-djm} and the dual variables, 
which are then used to prove that $\calM$ is $\frac{1}{3 + 2\varepsilon}$-approximate
in \cref{thm:stk_approx}. We also prove the space complexity of the algorithm in \cref{lem:stk_space}. 

\subsubsection{Dual Feasibility}
At termination, we set $y(c,v) = (1+\eps) \: \phi(c,v)$ for all $c \in [k]$ and $v \in V$. 
Recall that $y(c,v)$ is a dual variable from \dual{}, 
and $\phi(c,v)$ is an approximate dual variable used in \cref{alg:stk-djm}.
Unlike in classical \mwm{}, %
for \kdm{}, we have to satisfy the dual constraints of each edge for all $c \in [k]$. 
Although the dual variables $z(\cdot)$ are unused in the algorithm, they help ensure dual feasibility; see below.
If an edge $e=(u,v)$ is not in any matching (i.e, $e$ is discarded either in the streaming 
or post-processing phase) then $y(c,u) + y(c,v) \geq w(e)$ for all $c \in [k]$, which satisfies the constraint. 
However, if $e \in M_\ell$ for some $\ell \in [k]$, the dual constraints for $c \in [\ell+1..k]$ may be violated. 
Thus, we set 
\begin{equation} 
z(e) = \max \lrc{0,\; \max_{c \in [k]} \lrc{w(e) - (1+\eps) \lrp{\phi(c,u) + \phi(c,v)}} }.  
\label{eq:z} 
\end{equation}
The following claim is immediate. 

\begin{claim}
    \label{claim:k-djm-feas}
    For all vertices $v \in V$, edges $e \in E$, and $c \in [k]$, the dual variables $y(c,v)$ and $z(e)$
    defined above constitute a feasible solution to \dual{}. 
\end{claim}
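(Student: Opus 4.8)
The plan is to verify the two dual constraints of \dual{} directly from the definitions $y(c,v) = (1+\eps)\,\phi(c,v)$ and the formula for $z(e)$ in \eqref{eq:z}. The nonnegativity constraints are immediate: each $\phi(c,v)$ starts at $0$ and is only ever increased by reduced weights $w'(c,e) = w(e) - \phi_c$, which are nonnegative precisely because an edge is pushed to stack $\calS(c)$ only when $w(e) \geq (1+\eps)\phi_c \geq \phi_c$ (using $\eps > 0$ and $\phi_c \geq 0$); hence $y(c,v) = (1+\eps)\phi(c,v) \geq 0$. For $z(e)$, nonnegativity is built into the definition, since it is an outer $\max$ with $0$.

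The only real content is the edge constraint $y(c,u) + y(c,v) + z(e) \geq w(e)$ for every edge $e = (u,v) \in E$ and every color $c \in [k]$. Fix such an $e$ and $c$. By definition of $z(e)$ as a maximum over all colors, we have
\[
z(e) \;\geq\; w(e) - (1+\eps)\bigl(\phi(c,u) + \phi(c,v)\bigr) \;=\; w(e) - y(c,u) - y(c,v),
\]
where I use the final values of $\phi(c,u)$ and $\phi(c,v)$ at termination (these are the values plugged into \eqref{eq:z}, so the inequality is exact, not just asymptotic). Rearranging gives $y(c,u) + y(c,v) + z(e) \geq w(e)$, which is exactly the constraint. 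So the claim follows without any case analysis at all — the definition of $z(e)$ was reverse-engineered to make this one-line argument work.

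I should, however, be careful about one subtlety that makes this "immediate": the $\phi$ values appearing in \eqref{eq:z} must be the \emph{final} values, and the same final values must be the ones used to define $y(c,v)$; since both are read off at termination this is consistent, and the monotonicity of $\phi$ (it only ever increases) is what guarantees the argument is not sensitive to \emph{when} an edge was or was not processed for color $c$. In particular, even for an edge $e \in M_\ell$ with $c > \ell$ — the potentially-violated case flagged in the text — the term $z(e)$ absorbs exactly the deficit $w(e) - y(c,u) - y(c,v)$ whenever that quantity is positive, and contributes a harmless extra $0$ otherwise. There is no genuine obstacle here; the work was already done in choosing the right expression for $z(e)$, and the proof is essentially a restatement of that choice. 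The only thing to state cleanly is that no constraint other than these two exists in \dual{}, so checking both suffices.
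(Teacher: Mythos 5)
Your proof is correct and matches the paper's approach: the paper declares the claim ``immediate'' precisely because $z(e)$ is defined as the maximum over all colors of the deficit $w(e) - (1+\eps)(\phi(c,u)+\phi(c,v))$, so the edge constraint holds for every color by construction, and nonnegativity of $y$ follows from the monotone, nonnegative updates to $\phi$. You have simply written out the one-line verification the paper leaves implicit.
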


\subsubsection{Approximation Ratio} 
To prove the approximation ratio, we first separately relate the weight of the solution returned 
by \cref{alg:stk-djm} to the summations of the 
$\phi(\cdot, \cdot)$ and $z(\cdot)$ variables. 

\begin{lemma}
    \label{lem:dual_vertex_bound}
    The solution $\calM$ output by \cref{alg:stk-djm} satisfies 
    $w(\calM) \geq \frac{1}{2} \sum_{c \in [k]} \sum_{v \in V} \phi(c, v)$. 
\end{lemma}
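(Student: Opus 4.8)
The plan is to adapt the charging argument of Ghaffari and Wajc~\cite{GhaffariW19} for \ps{} to the multi-stack setting. First I would set up the bookkeeping. For each color $c \in [k]$, let $S_c \sse E$ denote the set of edges that get pushed onto $\calS(c)$ at some point during the execution (either in the streaming phase, with first eligible color $c$, or in the post-processing phase while some lower stack $\calS(c')$, $c' < c$, is being emptied). Since an edge popped from $\calS(c)$ is only ever re-pushed onto a \emph{strictly higher} stack, each $e \in S_c$ is pushed onto $\calS(c)$ exactly once, so its reduced weight $w'(c,e)$ is well defined, and the updates on \cref{algline:phi_u_update,algline:phi_v_update,algline:phi_u_update1,algline:phi_v_update1} add $w'(c,e)$ to both $\phi(c,u)$ and $\phi(c,v)$ for $e = (u,v)$, and nothing else ever changes a $\phi(c,\cdot)$ value. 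Hence at termination $\sum_{v \in V}\phi(c,v) = 2\sum_{e \in S_c} w'(c,e)$, and summing over $c$ reduces the claim to proving $\sum_{c \in [k]}\sum_{e \in S_c} w'(c,e) \le w(\calM)$.

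For this main inequality I would charge, separately for each color $c$, the quantity $w'(c,e)$ of every $e \in S_c$ to an edge of $M_c$. When $\calS(c)$ is emptied in the post-processing phase, each popped edge $e$ is either added to $M_c$ (when neither endpoint is covered) or is ``blocked'' by some edge $e' \in M_c$ adjacent to it that was popped earlier; in the first case charge $w'(c,e)$ to $e$ itself, in the second charge it to such an $e'$. Every edge of $S_c$ is eventually popped, hence charged. Now fix $e' = (a,b) \in M_c$ and bound the total charge it receives at color $c$: it gets $w'(c,e')$ from itself, plus $w'(c,e)$ for every blocked $e$ adjacent to $e'$ for which $e'$ was chosen as the blocker. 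Because $\calS(c)$ is a LIFO stack and each such $e$ is popped after $e'$, each such $e$ was pushed onto $\calS(c)$ before $e'$; and the reduced weights of edges incident to $a$ that were pushed onto $\calS(c)$ before $e'$ sum to exactly the value of $\phi(c,a)$ immediately before $e'$ is pushed (similarly at $b$), since $\phi(c,\cdot)$ is nondecreasing and accrues precisely those contributions. A blocked edge $e \neq e'$ is adjacent to $e'$ at only one of $a,b$, so there is no double counting, and the total charge to $e'$ is at most $w'(c,e') + \phi(c,a) + \phi(c,b)$ with $\phi$ evaluated immediately before $e'$ is pushed onto $\calS(c)$, which equals $w(e')$ by the definition of the reduced weight on \cref{algline:red_weight,algline:red_weight1}.

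Putting it together, $\sum_{e \in S_c} w'(c,e) \le \sum_{e' \in M_c} w(e') = w(M_c)$ for every $c$, hence $\sum_{c}\sum_{e \in S_c} w'(c,e) \le \sum_c w(M_c) = w(\calM)$, and combining with the identity from the first paragraph yields $w(\calM) \ge \tfrac{1}{2}\sum_{c \in [k]}\sum_{v \in V}\phi(c,v)$. The one delicate point — and the only place the multi-stack structure genuinely enters — is the assertion that every edge blocking $e'$ in $\calS(c)$ was pushed onto $\calS(c)$ \emph{before} $e'$. This relies on the fact that the post-processing phase processes $\calS(c)$ only after all pushes onto $\calS(c)$ originating from lower-numbered stacks are finished, so that the ordering on $S_c$ that governs the blocking relation is exactly the LIFO push/pop order of the single stack $\calS(c)$, with $\phi(c,\cdot)$ monotone along that order; I would state this as a short preliminary observation before running the charging.
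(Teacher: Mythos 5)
Your proposal is correct and follows essentially the same route as the paper's proof: both reduce the claim to $\sum_{v}\phi(c,v)=2\sum_{e}w'(c,e)$ summed over edges pushed to $\calS(c)$, then charge the reduced weight of each blocked edge to an adjacent edge of $M_c$ pushed later, using $w(e')=w'(c,e')+\phi(c,a)+\phi(c,b)$ to bound the total charge. The only cosmetic difference is that you assign each blocked edge to a single blocker, whereas the paper lets an edge appear in up to two of its sets $\calP(c,\cdot)$ and invokes non-negativity of the increments.
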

\begin{proof}
    It suffices to show that $w(M_c) \geq \frac{1}{2} \sum_{v \in V} \phi(c, v)$, for any matching $M_c \in \calM$.
    Let $E_c$ be the set of edges that were pushed to the stack $\calS(c)$ at some point in either the streaming (line \ref{algline:cond1}) or the 
    post-processing (line \ref{algline:cond2}) phases. 
    Note that only edges in $E_c$ could have caused the $\phi(c, \cdot)$ values to increase. 
    For ease of analysis, for an edge $e^\prime = (s, t) \in E_c$ let $\phi^{\text{old}}_{e^\prime}(c, \cdot)$ and $\phi^{\text{new}}_{e^\prime}(c, \cdot)$ 
    denote the $\phi(c, \cdot)$ values before and after $e^\prime$ is pushed to $\calS(c)$, respectively. 
    By definition of how we update the $\phi(c, \cdot)$ values, we have 
    $\phi_{e^\prime}^{\text{new}}(c, s) = \phi_{e^\prime}^{\text{old}}(c, s) + w^\prime(c, e^\prime)$, $\phi_{e^\prime}^{\text{new}}(c, t) = \phi_{e^\prime}^{\text{old}}(c, t) + w^\prime(c, e^\prime)$, 
    and $\phi_{e^\prime}^{\text{new}}(c, r) = \phi_{e^\prime}^{\text{old}}(c, r)$ for all $r \in V \sm \lrc{s, t}$.  
    This implies 
    \begin{equation}
        w^\prime(c, e^\prime) = \frac{1}{2} \sum_{x \in e^\prime} \phi_{e^\prime}^{\text{new}}(c, x) - \phi_{e^\prime}^{\text{old}}(c, x).
        \label{eq:diff_reduced_weight}
    \end{equation}
    Upon termination of \cref{alg:stk-djm}, since initially 
    $\phi(c,v) = 0$ for all $v \in V$, we also have that
    \begin{equation}
        \phi(c, v) = \sum_{e^\prime \in E_c} \phi_{e^\prime}^{\text{new}}(c, v) - \phi_{e^\prime}^{\text{old}}(c, v). 
        \label{eq:diff_phi}
    \end{equation}
    Now for an edge $e = (u, v) \in M_c$,  let 
    \begin{equation*}
        \calP_{<}(c, e) \coloneqq  \lrc{e^\prime \in E_c \colon e \cap e^\prime \neq \emptyset, e^\prime \text{ added to } \calS(c) \text{ before } e},
    \end{equation*} 
    i.e., the set of edges adjacent to $e$ that were pushed to $\calS(c)$ before $e$ was, and let $\calP(c, e) \coloneqq \calP_{<}(c, e) \cup \lrc{e}$. 
    Note that since we construct $M_c$ greedily, no edge $e^\prime \in \calP_{<}(c, e)$ is included in $M_c$ and 
    $E_c = \bigcup_{e \in M_c} \calP(c, e)$. 
    By definition of how we update the $\phi(c, \cdot)$ values, we have that
    $\phi_e^{\text{old}}(c, u) + \phi_e^{\text{old}}(c, v) = \sum_{e^\prime \in \calP_{<}(c, e)} w^\prime(c, e^\prime)$.  
    Additionally, by the definition of $w^\prime(c, e)$,
    \begin{align*}
        w(e) &= w^\prime(c, e) + \phi_e^{\text{old}}(c, u) + \phi_e^{\text{old}}(c, v)
        = \sum_{e^\prime \in \calP(c, e)} w^\prime(c, e^\prime) 
        = \frac{1}{2}\sum_{e^\prime \in \calP(c, e)} \sum_{x \in e^\prime} \phi_{e^\prime}^{\text{new}}(c, x) - \phi_{e^\prime}^{\text{old}}(c, x),
    \end{align*}
    where the last equality follows by \cref{eq:diff_reduced_weight}. 
    Hence, 
    \begin{align*}
        w(M_c) = \sum_{e \in M_c} w(e) 
        &= \frac{1}{2} \sum_{e \in M_c}  \sum_{e^\prime \in \calP(c, e)} \sum_{x \in e^\prime} \phi_{e^\prime}^{\text{new}}(c, x) - \phi_{e^\prime}^{\text{old}}(c, x) \\ 
        &\geq \frac{1}{2} \sum_{e \in E_c}  \sum_{v \in e} \phi_{e}^{\text{new}}(c, v) - \phi_{e}^{\text{old}}(c, v) \\ 
        &= \frac{1}{2} \sum_{v \in V} \sum_{e \in E_c} \phi_{e}^{\text{new}}(c, v) - \phi_{e}^{\text{old}}(c, v)
        = \frac{1}{2} \sum_{v \in V} \phi(c, v). 
    \end{align*}
    The inequality follows since each edge $e^\prime = (u, v) \in E_c \sm M_c$ appears in at least one and at most two $\calP(c, \cdot)$ sets (say, if there exists
    $e_1, e_2 \in M_c$ that $u$ and $v$ are incident on, respectively)
    and the last equality follows by \cref{eq:diff_phi}.
\end{proof}

\begin{lemma}
    The solution $\calM$ output by \cref{alg:stk-djm} satisfies 
    $w(\calM) \geq \sum_{e \in E} z(e)$. 
    \label{lem:dual_edge_bound}
\end{lemma}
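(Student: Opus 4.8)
The plan is to prove the stronger statement that every edge $e$ with $z(e) > 0$ ends up in the matching $\calM$ returned by \cref{alg:stk-djm} (i.e.\ $e \in M_c$ for some $c$), and then to combine this with the pointwise bound $z(e) \le w(e)$ and the edge-disjointness of $M_1, \dots, M_k$.

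First I would record two elementary facts about \cref{alg:stk-djm}. \emph{(i)} The values $\phi(c, v)$ are non-decreasing throughout the execution, because every update adds the reduced weight $w'(c, e) = w(e) - \phi_c$, which is $\ge 0$ since $e$ is pushed only when $w(e) \ge (1+\eps)\phi_c \ge \phi_c$. \emph{(ii)} Immediately after $e = (u,v)$ is pushed to a stack $\calS(c)$ (in either phase), the quantity $w(e) - (1+\eps)\bigl(\phi(c,u)+\phi(c,v)\bigr)$ is strictly negative: writing $\phi_c$ for $\phi(c,u)+\phi(c,v)$ just before the push, the push raises this sum to $\phi_c + 2w'(c,e) = 2w(e) - \phi_c$, so the quantity becomes $-(1+2\eps)w(e) + (1+\eps)\phi_c \le -2\eps\, w(e) < 0$, using $(1+\eps)\phi_c \le w(e)$. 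By \emph{(i)} it stays negative at termination.

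Next I would prove the claim: if $z(e) > 0$ then $e$ is matched. Suppose not, so $e$ is discarded. Tracing $e$, it is either discarded during the streaming phase, or pushed to a strictly increasing sequence of stacks $\calS(c_0), \calS(c_1), \dots, \calS(c_t)$ and then discarded while $\calS(c_t)$ is emptied. I claim that for every color $c \in [k]$ the term $w(e) - (1+\eps)\bigl(\phi(c,u)+\phi(c,v)\bigr)$ is $\le 0$ at termination. If $c \in \{c_0,\dots,c_t\}$ this is fact \emph{(ii)}. Otherwise, at some earlier moment the condition on line \ref{algline:cond1} or line \ref{algline:cond2} was tested for color $c$ and failed: colors not reached before the streaming loop broke at $c_0$ were tested by the streaming loop, colors strictly between $c_i$ and $c_{i+1}$ were tested by the post-processing loop at $\calS(c_i)$, and all colors above $c_t$ were tested (and failed, since $e$ is discarded) by the post-processing loop at $\calS(c_t)$; the streaming-discard case is the same with the whole palette tested by the streaming loop. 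In each such case $w(e) < (1+\eps)\bigl(\phi(c,u)+\phi(c,v)\bigr)$ held then, hence still holds at termination by \emph{(i)}. Thus every term in \cref{eq:z} is non-positive, giving $z(e) = 0$, a contradiction.

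To finish, since $\phi \ge 0$ and $w(e) > 0$ we have $0 \le z(e) \le w(e)$ for every edge, and therefore
\[
\sum_{e \in E} z(e) \;=\; \sum_{e:\, z(e)>0} z(e) \;\le\; \sum_{e:\, z(e)>0} w(e) \;\le\; \sum_{e \in \bigcup_{c \in [k]} M_c} w(e) \;=\; \sum_{c \in [k]} w(M_c) \;=\; w(\calM),
\]
where the middle inequality uses the claim and the penultimate equality uses pairwise edge-disjointness of the $M_c$. The main obstacle is the bookkeeping in the claim: one must follow an edge's trajectory through the stacks and check color by color that each term of $z(e)$ is non-positive at termination, matching the ``tested and failed earlier'' colors (handled via monotonicity of $\phi$) against the ``pushed here'' colors (handled via fact \emph{(ii)}), while keeping straight which colors are examined in which phase and in which order.
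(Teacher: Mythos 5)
Your proof is correct and follows essentially the same route as the paper's: show $z(e)\le w(e)$ for every edge and $z(e)=0$ for every unmatched edge, then sum over the (disjoint) matchings. The only difference is one of detail — the paper asserts in a single sentence that a discarded edge satisfies $w(e) < (1+\eps)(\phi(c,u)+\phi(c,v))$ for all $c\in[k]$, whereas you carefully justify this color by color, including the colors at which $e$ was actually pushed (your fact \emph{(ii)}), which is a worthwhile check the paper elides.
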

\begin{proof}
    From the definition of $z(\cdot)$ in \cref{eq:z}, 
    we have $z(e) \leq w(e)$ for all $e \in E$. Moreover, we can show that $z(e) = 0$ for each edge $e = (u, v) \notin \calM$.
    This holds since either $e$ was discarded during the streaming phase, or during the post-processing phase. 
    In either case,  $w(e) < (1+\varepsilon)(\phi(c, u) + \phi(c, v))$ for all $c \in [k]$, which gives $z(e) = 0$. 
    Hence, $\sum_{e \in E} z(e) =  \sum_{e \in \calM} z(e) + \sum_{e \in E \sm \calM} z(e) \leq \sum_{e \in \calM} w(e) = w(\calM)$. 
\end{proof}

Using \cref{lem:dual_vertex_bound,lem:dual_edge_bound} and weak duality, we can now show the approximation ratio. 

\begin{theorem}
    For any constant $\varepsilon > 0$, the $k$-disjoint matching $\calM$ returned by \cref{alg:stk-djm} is a 
    $\frac{1}{3 + 2\varepsilon}$-approximate solution to \kdm{}.  
    \label{thm:stk_approx}
\end{theorem}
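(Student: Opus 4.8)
The plan is to read off an upper bound on $\OPT$ from weak LP duality and then collapse it onto $w(\calM)$ using the two lower bounds already in hand. Concretely, I would first invoke \cref{claim:k-djm-feas}: the setting $y(c,v) = (1+\eps)\,\phi(c,v)$ for all $c \in [k]$, $v \in V$, together with the $z(\cdot)$ values from \cref{eq:z}, is feasible for \dual{}. Since \primal{} is a relaxation of \kdm{}, weak duality applied to this feasible dual solution gives
\[
\OPT \;\leq\; \sum_{c \in [k]} \sum_{v \in V} y(c,v) + \sum_{e \in E} z(e) \;=\; (1+\eps) \sum_{c \in [k]} \sum_{v \in V} \phi(c,v) \;+\; \sum_{e \in E} z(e),
\]
where $\OPT$ denotes the weight of an optimal $k$-disjoint matching.

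Next I would substitute the two bounds from \cref{lem:dual_vertex_bound,lem:dual_edge_bound}, which lower bound $w(\calM)$ in terms of precisely the two sums appearing above: $\sum_{c \in [k]} \sum_{v \in V} \phi(c,v) \leq 2\,w(\calM)$ and $\sum_{e \in E} z(e) \leq w(\calM)$. Plugging both into the displayed inequality yields
\[
\OPT \;\leq\; (1+\eps)\cdot 2\,w(\calM) \;+\; w(\calM) \;=\; (3 + 2\eps)\,w(\calM),
\]
and rearranging gives $w(\calM) \geq \frac{1}{3+2\eps}\,\OPT$, which is the claim. (If one wants the $\frac{1}{3+\eps}$ form advertised in the abstract, it suffices to run \cref{alg:stk-djm} with $\eps' = \eps/2$.)

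I do not expect a genuine obstacle at this final step: all the real work has been done in establishing dual feasibility and the two lemmas, and what remains is a routine assembly via weak duality. The only point that warrants a moment of care is bookkeeping of the $(1+\eps)$ scaling: it is the scaling in the definition of $y(\cdot,\cdot)$, multiplied by the factor $2$ coming from the matching-style bound of \cref{lem:dual_vertex_bound}, that produces the $2 + 2\eps$ contribution, while the $z(\cdot)$ term (bounded by $w(\calM)$ in \cref{lem:dual_edge_bound}) supplies the remaining $+1$, for a total of $3 + 2\eps$.
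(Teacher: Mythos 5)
Your proposal is correct and follows exactly the paper's argument: invoke \cref{claim:k-djm-feas} for dual feasibility, apply weak duality to bound $w(\calM^{\ast})$ by the dual objective, and then substitute the bounds from \cref{lem:dual_vertex_bound,lem:dual_edge_bound} to obtain the factor $(1+\eps)\cdot 2 + 1 = 3+2\eps$. Your bookkeeping of the $(1+\eps)$ scaling and the remark about rescaling $\eps$ to recover the advertised $\frac{1}{3+\eps}$ form are both consistent with the paper.
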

\begin{proof}
    Let $\calM^{\ast}$ be an optimal solution to \kdm{}. By weak duality and the 
    fact that \primal{} is an LP-relaxation of \kdm{}, we have that 
    $w(\calM^{\ast}) \leq \sum_{c \in [k]} \sum_{v \in V} y(c, v)  + \sum_{e \in E} z(e)$
    for the dual variables $y(\cdot, \cdot)$ and $z(\cdot)$ defined in \cref{claim:k-djm-feas}. 
    Recalling that we set $y(c, v) = (1 + \varepsilon)\phi(c, v)$, \cref{lem:dual_vertex_bound,lem:dual_edge_bound}
    imply that
    $(2(1 + \varepsilon)) w(\calM) \geq \sum_{c \in [k]} \sum_{v \in V} y(c, v)$ and $w(\calM) \geq \sum_{e \in E} z(e)$, respectively. 
    Combining these, we obtain 
    \begin{align*}
        (3 + 2\varepsilon) w(\calM) &\geq \sum_{c \in [k]} \sum_{v \in V} y(c, v) + \sum_{e \in E} z(e) \geq w(\calM^{\ast}),
    \end{align*}
    which when rearranged gives $w(\calM) \geq \frac{1}{3 + 2\varepsilon} w(\calM^{\ast})$. 
\end{proof}

\subsubsection{Time and Space Complexity}
The total runtime of \cref{alg:stk-djm} is $O(km)$, which follows as the processing time for each edge is $O(k)$ as it may be considered for insertion %
into each of the $k$ stacks. Additionally, the size of each stack is trivially bounded by $m$, so the post-processing step of unwinding the stacks takes $\bigO{km}$ time. 
The space complexity of \cref{alg:stk-djm} can also easily be bound. 
We first make the following useful observation. 

\begin{observation}
    When an edge $e = (u, v)$ gets pushed to a stack $\calS(c)$, both 
    $\phi(c, v)$ and $\phi(c, u)$ increase by at least a factor of $1 + \varepsilon$. 
    \label{obs:phi_increase_factor}
\end{observation}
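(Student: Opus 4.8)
The plan is to unwind the two places in \cref{alg:stk-djm} where an edge can be pushed to a stack and observe that they use identical logic. An edge $e = (u,v)$ is pushed to $\calS(c)$ either during the streaming phase at line~\ref{algline:cond1} or during post-processing at line~\ref{algline:cond2}; in both cases the push is gated by the test $w(e) \geq (1+\eps)\lrp{\phi(c,u) + \phi(c,v)}$ and is immediately followed by setting $w'(c,e) \gets w(e) - \phi(c,u) - \phi(c,v)$ and increasing each of $\phi(c,u)$ and $\phi(c,v)$ by $w'(c,e)$ (lines~\ref{algline:red_weight}--\ref{algline:phi_v_update} and \ref{algline:red_weight1}--\ref{algline:phi_v_update1}). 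So it suffices to analyze this common update, and one checks by inspection that no other line of the algorithm modifies any $\phi(c,\cdot)$.

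First I would fix notation: let $\phi^{\text{old}}(c,\cdot)$ and $\phi^{\text{new}}(c,\cdot)$ denote the values just before and just after the push. A one-line substitution gives $\phi^{\text{new}}(c,u) = \phi^{\text{old}}(c,u) + w'(c,e) = w(e) - \phi^{\text{old}}(c,v)$, and symmetrically $\phi^{\text{new}}(c,v) = w(e) - \phi^{\text{old}}(c,u)$. I would then invoke the gating condition $w(e) \geq (1+\eps)\lrp{\phi^{\text{old}}(c,u) + \phi^{\text{old}}(c,v)}$ together with $\phi^{\text{old}}(c,v) \geq 0$ to conclude
\[
\phi^{\text{new}}(c,u) = w(e) - \phi^{\text{old}}(c,v) \geq (1+\eps)\phi^{\text{old}}(c,u) + \eps\,\phi^{\text{old}}(c,v) \geq (1+\eps)\phi^{\text{old}}(c,u),
\]
and the bound for $v$ follows from the same computation with the roles of $u$ and $v$ swapped.

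The only ancillary fact needed is that the $\phi$ values are never negative, so that the non-negativity step above is legitimate; this is a trivial induction over the sequence of pushes, since the $\phi(c,\cdot)$ start at $0$ and every increment equals a reduced weight $w'(c,e) = w(e) - \phi(c,u) - \phi(c,v) \geq \eps\lrp{\phi(c,u) + \phi(c,v)} \geq 0$ by the push condition. I do not anticipate any real obstacle here — the statement is an immediate consequence of the threshold test — and its role is purely to seed the space analysis, where iterating this multiplicative growth against the assumption $W = \poly{n}$ caps the number of edges incident on any vertex that can land in a given stack at $\bigO{\log_{1+\eps} n}$.
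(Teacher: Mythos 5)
Your proposal is correct and follows essentially the same route as the paper: both arguments substitute the update rule $\phi^{\text{new}} = \phi^{\text{old}} + w'(c,e)$, apply the gating condition $w(e) \geq (1+\eps)\lrp{\phi^{\text{old}}(c,u)+\phi^{\text{old}}(c,v)}$, and drop a nonnegative term to conclude the multiplicative growth. The only difference is cosmetic — the paper bounds the increment $w'(c,e) \geq \eps\,\Phi^{\text{old}}_e \geq \eps\,\phi^{\text{old}}(c,v)$ while you expand $\phi^{\text{new}}(c,u) = w(e) - \phi^{\text{old}}(c,v)$ directly — and your explicit induction for $\phi \geq 0$ is a detail the paper leaves implicit.
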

\begin{proof}
    Let $\phi_{e}^{\text{old}}(c, \cdot)$ and $\phi_{e}^{\text{new}}(c, \cdot)$ 
    be the values of $\phi(c, \cdot)$ before and after $e$ is pushed to $\calS(c)$, respectively.
    Note that since $e$ is pushed to $\calS(c)$, it must be that $w(e) \geq (1+\varepsilon)\Phi^{\text{old}}_e$, where 
    $\Phi^{\text{old}}_e \coloneqq \phi^{\text{old}}_{e}(c, u) + \phi_{e}^{\text{old}}(c, v)$.  
    Additionally, by how we update the $\phi(c, \cdot)$ values, we have $\phi_e^{\text{new}}(c, v) - \phi_e^{\text{old}}(c, v) = w'(c, e) = w(e) - \Phi^{\text{old}}_e$. 
    Thus, 
    \begin{align*}
            \phi^{\text{new}}_{e}(c, v) - \phi^{\text{old}}_{e}(c, v) = w(e) - \Phi^{\text{old}}_e 
            \geq (1+\varepsilon)\Phi^{\text{old}}_e - \Phi^{\text{old}}_e 
            \geq \varepsilon \phi^{\text{old}}_e(c, v). 
    \end{align*} 
    Rearranging, we get $\phi^{\text{new}}_{e}(c, v) \geq (1+\varepsilon) \phi^{\text{old}}_{e}(c, v)$. 
    The same argument holds for vertex $u$. 
\end{proof}

\begin{lemma}
    For any constant $\varepsilon > 0$, \cref{alg:stk-djm} uses 
    $\bigO{nk \log^2 n}$ bits of space.
    \label{lem:stk_space}
\end{lemma}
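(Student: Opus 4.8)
The plan is to bound the total space by counting the number of edges stored across all $k$ stacks and showing it is $O(nk\log_{1+\eps} n)$, since each stored edge needs $O(\log n)$ bits for its endpoints and weight, and the $\phi(c,v)$ variables take only $O(nk\log n)$ bits total (each is a sum of reduced weights, hence at most $n w_{\max} = \poly n$, so $O(\log n)$ bits apiece). The heart of the argument is a per-color, per-vertex bound: for each fixed color $c$ and vertex $v$, the number of edges incident on $v$ that are ever pushed onto $\calS(c)$ is $O(\log_{1+\eps} n)$.

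The key step uses \cref{obs:phi_increase_factor}. Fix $c$ and $v$. Each time an edge incident on $v$ is pushed onto $\calS(c)$ (whether in the streaming phase via line~\ref{algline:cond1} or in post-processing via line~\ref{algline:cond2}), the value $\phi(c,v)$ increases by a factor of at least $1+\eps$. The value $\phi(c,v)$ starts at $0$, so I cannot directly multiply from $0$; instead I note that the first push to an incident edge sets $\phi(c,v)$ to $w'(c,e) = w(e) - \phi^{\text{old}}_e(c,u)$, and since only positive reduced weights are pushed and $w(e) \geq w_{\min}$ with all reduced weights at least... — more carefully, after the first incident push $\phi(c,v) \geq w'(c,e)$, and one checks $w'(c,e)$ is a positive quantity bounded below in terms of $w_{\min}$ (this needs a small argument: the reduced weight is positive by the push condition, and the $\phi$ values are sums of previous positive reduced weights; the cleanest route is to observe that the smallest positive value any $\phi(c,v)$ can take is at least $w_{\min}$ since the very first reduced weight contributing to it is a full edge weight minus $0$, i.e. at least $w_{\min}$ — actually the first edge pushed incident to $v$ in color $c$ has $\phi^{\text{old}} = \phi(c,u)$ which may be positive, so $w'$ could be small; the robust fix is to track instead $\phi(c,u)+\phi(c,v)$ for the pushed edge, or simply use that after $t$ incident pushes $\phi(c,v) \geq (1+\eps)^{t-1} w_{\min}$ by induction, using that each push multiplies by $(1+\eps)$ and the base case gives $\geq w'(c,e_1) \geq$ something positive). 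Pairing this lower bound with the upper bound $\phi(c,v) \leq \sum_{e} w(e) \leq n\,w_{\max}$ gives $(1+\eps)^{t-1} w_{\min} \leq n\, w_{\max}$, hence $t \leq 1 + \log_{1+\eps}(n W) = O(\log_{1+\eps} n)$ since $W = w_{\max}/w_{\min} = \poly n$.

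Summing over all $n$ vertices and $k$ colors, the total number of (edge, stack) memberships is $O(nk \log_{1+\eps} n)$; each costs $O(\log n)$ bits, and since $\log_{1+\eps} n = O(\eps^{-1}\log n)$ is $O(\log n)$ for constant $\eps$, the stacks use $O(nk\log^2 n)$ bits. Adding the $O(nk\log n)$ bits for the $\phi$ array (dominated) yields the claimed $O(nk\log^2 n)$ bound. I expect the main obstacle to be making the lower-bound-on-$\phi$ base case fully rigorous: when an incident edge is pushed, $\phi(c,v)$ increases by $w'(c,e)$, which is positive but a priori could be tiny, so the exponential-growth argument needs to be anchored correctly — the right anchor is that \emph{at the moment of the first incident push} the increment $w'(c,e) = w(e) - (\phi(c,u)+\phi(c,v))$ equals $w(e)$ if this is also the first push touching the other endpoint in color $c$, and otherwise the adjacent endpoint already witnessed growth; carefully, the clean statement is that $\phi(c,v)$ takes positive values only from the set $\{\sum \text{reduced weights}\}$ and the minimum positive such value is $\geq w_{\min}$ because reduced weights, while possibly small individually, sum with the property that the total $\phi(c,v)$ after $t$ incident pushes dominates $(1+\eps)^{t-1}$ times the \emph{first} increment, and the first increment on the first-ever push into color $c$ anywhere is a full weight $\geq w_{\min}$; one then routes the bound through a global minimum. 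I would state this as a short sub-claim and verify it by induction on pushes, which is the only nonroutine piece.
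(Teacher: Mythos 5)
Your overall strategy coincides with the paper's: for each fixed color $c$ and vertex $v$, bound the number of edges incident on $v$ that are ever pushed onto $\calS(c)$ by $\bigO{\log_{1+\eps} n}$ using the multiplicative growth of $\phi(c,v)$ (\cref{obs:phi_increase_factor}), then multiply by $nk$ stacks/vertices and $\bigO{\log n}$ bits per stored edge. The step you yourself flag as ``the only nonroutine piece''---anchoring the exponential growth by lower-bounding $\phi(c,v)$ after the \emph{first} incident push---is indeed the crux, and your proposal does not actually close it. Of the routes you sketch, the assertion that ``the minimum positive value any $\phi(c,v)$ can take is at least $w_{\min}$'' is false: the first increment at $v$ is $w(e)-\phi^{\text{old}}(c,u)$, which can be far below $w_{\min}$ when $u$ has already accumulated potential. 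Likewise, ``routing through a global minimum'' via the first-ever push in color $c$ does not transfer to a different vertex whose first incident push occurs later.

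The missing observation is that the push condition itself bounds \emph{every} reduced weight away from zero: an edge is pushed onto $\calS(c)$ only if $w(e) \geq (1+\eps)\lrp{\phi(c,u)+\phi(c,v)}$, hence
\[
w'(c,e) \;=\; w(e) - \lrp{\phi(c,u)+\phi(c,v)} \;\geq\; w(e) - \frac{w(e)}{1+\eps} \;=\; \frac{\eps}{1+\eps}\,w(e) \;\geq\; \frac{\eps\, w_{\min}}{1+\eps}.
\]
In particular the first incident push sets $\phi(c,v) \geq \frac{\eps w_{\min}}{1+\eps}$, so after $d$ incident pushes $\phi(c,v) \geq \frac{\eps w_{\min}}{1+\eps}(1+\eps)^{d-1}$. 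Combined with the upper bound $\phi(c,v)\leq w_{\max}$ (your looser $n\,w_{\max}$ also suffices since $W=\poly{n}$), this yields $d \leq 2+\log_{1+\eps}(W\eps^{-1}) = \bigO{\log n}$ for constant $\eps$, which is exactly how the paper argues. With this one line supplied, the remainder of your accounting---$\bigO{nk\log n}$ stored edges at $\bigO{\log n}$ bits each, plus the dominated $\phi$-array---is correct.
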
     
\begin{proof}
    Consider a vertex $v \in V$ and color $c \in [k]$. 
    Let $e = (u, v)$ be an edge that is pushed to $\calS(c)$, and let 
     $\phi_{e}^{\text{old}}(c, \cdot)$ and $\phi_{e}^{\text{new}}(c, \cdot)$ denote
    the values of $\phi(c, \cdot)$ before and after $e$ is pushed to $\calS(c)$, respectively.
    Suppose that after $e$ is pushed, we have that $v$ is incident on $d$ edges in $\calS(c)$. 
    For the special case of $d = 1$
    corresponding to the first edge $v$ is incident on that is included in $\calS(c)$, we can derive a lower bound 
    on $\phi_e^{\text{new}}(c,v)$.
    We use  $\phi_e^{\text{old}}(c, v) = 0$ and $w(e) \geq (1 + \varepsilon)\phi_{e}^{\text{old}}(c, u)$ to obtain
    \begin{equation*}
        \phi_e^{\text{new}}(c, v) = w'(c, e) = w(e) - \phi_e^{\text{old}}(c, u) \geq w(e) - \frac{w(e)}{1 + \varepsilon} \geq \frac{\varepsilon w_{\min}}{1 + \varepsilon}. 
    \end{equation*}
    That is, the minimum non-zero value of $\phi(c, v)$ is at least $\frac{\varepsilon w_{\min}}{1 + \varepsilon}$. Using this together with \cref{obs:phi_increase_factor} implies that for arbitrary values of $d$, 
    $\phi_e^{\text{new}}(c, v) \geq \frac{\varepsilon w_{\min}}{1 + \varepsilon} (1 + \varepsilon)^{d-1}$. Moreover, by definition of how we compute reduced weights and update the $\phi(c, \cdot)$ values, we have that $\phi_e^{\text{new}}(c, v) \leq w_{\max}$. Recalling that $W = \frac{w_{\max}}{w_{\min}}$ and using these two bounds, we find that 
    $(1 + \varepsilon)^{d-2} \leq {W}{\varepsilon}^{-1}$. 
    Taking the logarithm of both sides, we get 
    \begin{equation*}
        d \leq 2 + \log_{1 + \varepsilon} (W\varepsilon^{-1}) = \bigO{\log n},
    \end{equation*}
    since we assume $\varepsilon$ is constant and $W$ is $\poly{n}$. 
    That is, $v$ can be incident on at most $\bigO{\log n}$ edges in $\calS(c)$. 
    Hence, $\abs{\calS(c)} = \bigO{n \log n}$ %
    and the total number of edges stored in all the stacks is $\bigO{nk \log n}$. 
    Each edge weight requires $\bigO{\log n}$ bits; similarly, each $\phi(\cdot, \cdot)$ variable requires $\bigO{\log n}$ bits as it is the sum of 
    at most $\Delta < n$ edge weights, giving the space complexity of $\bigO{nk\log^2 n}$ bits. 
\end{proof}

    \section{A \texorpdfstring{$b$}{b}-Matching Based Approach}
\label{sec:bmatching}

Recall that a $b$-matching generalizes a matching by allowing each vertex to be 
incident to \emph{at most} $b(v)$ matched edges for some function $b \colon V \to \Z_+$. 
When $b(v) = k$ for all $v \in V$, where $k$ is some positive integer, 
we refer to the matching as a $k$-matching and consider the \textsf{Maximum Weight $k$-Matching} (\mwkm{}) problem.  
Note that 
$k$-disjoint matchings always induce valid $k$-matchings, but the reverse need not hold (e.g., the triangle graph with $k=2$).  
In this sense, \mwkm{} provides a relaxation of \kdm{}
(i.e., if $F^{\ast}$ and $\calM^{\ast}$ are optimal solutions to \mwkm{} and \kdm{} on the same graph, respectively, then $w(F^{\ast}) \geq w(\calM^{\ast})$). 
This leads to the following approach to construct a feasible $k$-disjoint matching:

\begin{enumerate}[]
    \item Solve \mwkm{} on the graph $G$, which gives a $k$-matching $F$.  
    Note that $\Delta_F$, the maximum degree of a vertex in the induced graph  $G[F]$, may be less than $k$.
    \item Properly $(\Delta_F + 1)$-edge color the subgraph $G[F]$, which may use up to $k+1$ colors.
    \item Return $\calM$, the collection of edges colored by the $k$ heaviest color classes.
\end{enumerate}

This approach was originally used for unweighted \kdm{} by Feige \etal\cite{feige2002approximating}, where they showed it provided a $(1- \frac{1}{k+1})$-approximation guarantee. 
Here we extend this to weighted \kdm{} and show that the reduction is approximation preserving.

\begin{restatable}{lemma}{bmatchinglemma}
    Let $F$ be an $\alpha$-approximate solution to \mwkm{} on a graph $G$. If the
    induced subgraph $G[F]$ is properly $(\Delta_F + 1)$ colored, the set of edges colored by the 
    $k$ heaviest color classes is an $ \alpha (1- \frac{1}{k+1})$-approximate solution to \kdm{} on $G$. 
    \label{lem:bmatching_to_kdm}
\end{restatable}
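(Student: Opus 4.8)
The plan is to establish the approximation guarantee by chaining two inequalities: one relating the weight of the $k$ heaviest color classes to $w(F)$, and one relating $w(F)$ to $w(\calM^{\ast})$. The second is immediate from the preliminaries: since $F$ is $\alpha$-approximate for \mwkm{}, and \mwkm{} is a relaxation of \kdm{} (every $k$-disjoint matching is a $k$-matching, so $w(F^{\ast}) \geq w(\calM^{\ast})$), we get $w(F) \geq \alpha \, w(F^{\ast}) \geq \alpha \, w(\calM^{\ast})$. The real content is the first inequality.

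\textbf{Main step.} After properly coloring $G[F]$ with color classes $C_1, \ldots, C_t$ where $t = \Delta_F + 1 \leq k+1$, each $C_i$ is a matching and the classes partition $F$, so $\sum_{i=1}^{t} w(C_i) = w(F)$. Order the classes so that $w(C_1) \geq \cdots \geq w(C_t)$ and let $\calM = C_1 \cup \cdots \cup C_k$ (taking all $t$ classes if $t \leq k$, which trivially gives $w(\calM) = w(F)$). In the remaining case $t = k+1$, $\calM$ drops exactly the lightest class $C_{k+1}$, and since $C_{k+1}$ has weight at most the average, $w(C_{k+1}) \leq \frac{1}{k+1} w(F)$. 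Therefore $w(\calM) = w(F) - w(C_{k+1}) \geq (1 - \frac{1}{k+1}) w(F)$. I should also note that $\calM$, being a union of $k$ of the matchings $C_i$, is indeed a feasible $k$-disjoint matching for $G$ (the $C_i$ are pairwise edge-disjoint since they form a partition of $F$), so it is a valid solution to compare against the optimum. Combining the two inequalities yields $w(\calM) \geq \alpha(1 - \frac{1}{k+1}) w(\calM^{\ast})$.

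\textbf{The obstacle} is not a technical difficulty but a bookkeeping subtlety: one must be careful that $\Delta_F + 1$ can be strictly less than $k+1$, so the averaging argument should be phrased in terms of the actual number of color classes $t$, with the bound $w(C_t) \leq \frac{1}{t} w(F) \leq \frac{1}{k+1} w(F)$ holding only because $t \leq k+1$ — and when $t \leq k$ no class is dropped at all, making the guarantee vacuously satisfied. A second point to verify is that dropping the \emph{lightest} class (rather than an arbitrary one) is what licenses the averaging bound; this is why the classes are sorted by weight. Everything else is routine, and no appeal to the structure of the edge-coloring algorithm is needed beyond the facts that it is proper (each class is a matching) and uses at most $\Delta_F + 1$ colors.
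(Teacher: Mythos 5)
Your proof is correct and follows essentially the same route as the paper's: partition $F$ into the $\Delta_F+1 \leq k+1$ color classes, drop the lightest one (losing at most a $\frac{1}{k+1}$ fraction of $w(F)$ by averaging), and chain this with $w(F) \geq \alpha\, w(F^{\ast}) \geq \alpha\, w(\calM^{\ast})$ via the relaxation argument. The only blemish is in your closing remark, where the chain $w(C_t) \leq \frac{1}{t}w(F) \leq \frac{1}{k+1}w(F)$ is stated loosely (the second inequality points the wrong way when $t < k+1$), but this does not affect the argument since your main step correctly treats the case $t \leq k$ separately and only invokes the averaging bound when $t = k+1$.
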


\begin{proof}
    Let $\calM$ represent a solution to \kdm{} on $G$. 
    Additionally, let $F^{\ast}$ and $\calM^{\ast}$ be the optimal solutions to \mwkm{} and \kdm{} on $G$, respectively. 
    
    By definition of a $k$-matching, we have that $\Delta_F \leq k$. 
    If $\Delta_F < k$, then the edge coloring used at most $k$ colors, and we can return $\calM = \lrc{ M_1, \ldots, M_k}$, where $M_i$ is the set 
    of edges colored with $i$ for $i \in [k]$. In this case, we have $w(\calM) = w(F)$. Otherwise, if $\Delta_F = k$, then the edge coloring may have 
    used $k + 1$ colors. Without loss of generality, let $k+1$ denote the color class with the minimum weight. 
    Again let $\calM = \lrc{M_1, \ldots, M_k}$.
    By discarding the edges with color $k+1$, at most a $\frac{1}{k+1}$ fraction of the weight of $F$ is lost. 
    Thus, in either case
    \begin{align*}
        w(\calM) \geq \lrp{1 - \frac{1}{k+1}} w(F)
             \geq \alpha \lrp{1 - \frac{1}{k+1}}  w(F^{\ast}) 
             \geq \alpha  \lrp{1 - \frac{1}{k+1}}  w(\calM^{\ast}),    
    \end{align*}
    where the penultimate inequality follows from the definition of $F$, and the last inequality 
    follows from \mwkm{} being a relaxation of \kdm{}.
\end{proof}

Note that properly $(\Delta+1)$-edge coloring a graph $G$ can be done in $\bigO{m}$ space using the $\bigO{nm}$ time Misra-Gries algorithm \cite{misra1992constructive}. 
If we use a semi-streaming algorithm for \mwkm{} to handle the streaming process and find some $k$-matching $F$, the remaining
steps of the algorithm only require memory linear in $\abs{F}$, resulting in a semi-streaming algorithm for \kdm{}.
Using the semi-streaming $\frac{1}{2+\varepsilon}$-approximation algorithm of Huang and Sellier \cite{huang2021semi} for 
\mwbm{} with $b(v) = k$ for all $v \in V$, 
\cref{lem:bmatching_to_kdm} implies a semi-streaming 
$(\frac{1}{2+\varepsilon}) (1 - \frac{1}{k+1})$-approximation algorithm for \kdm{}. %
The space requirement is $\bigO{nk \log^2 n}$ bits,
and it is determined by the Huang and Sellier algorithm. 
We describe the algorithm formally
in \cref{alg:kmatching-to-kdm}, where  \Call{SS-$b$M}{} and \Call{Color}{} refer to the
algorithms of Huang and Sellier \cite{huang2021semi} and Misra and Gries \cite{misra1992constructive}, respectively. 
For completeness, in \cref{sec:alg_streambmatch,sec:alg_misragries} we give a detailed summary of how these algorithms work. 

\begin{theorem}
    For any constant $\varepsilon > 0$, \cref{alg:kmatching-to-kdm} is a $(\frac{1}{2+\varepsilon}) (1- \frac{1}{k+1})$-approximate 
    semi-streaming algorithm for \kdm{} that uses $\bigO{nk \log^2 n}$ bits of space. 
    \label{thm:stkb}
\end{theorem}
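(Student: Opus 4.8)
The plan is to obtain \cref{thm:stkb} by plugging the semi-streaming \mwbm{} algorithm of Huang and Sellier~\cite{huang2021semi}, run with $b(v) = k$ for every $v \in V$, into the reduction of \cref{lem:bmatching_to_kdm}; this is precisely what \cref{alg:kmatching-to-kdm} does, so the work is to verify three things: the approximation ratio, that the algorithm makes a single pass, and the space bound.

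For the approximation ratio, first note that \Call{SS-$b$M}{} run on the edge stream with $b \equiv k$ returns a $k$-matching $F$ that is $\frac{1}{2+\varepsilon}$-approximate with respect to an optimal \mwkm{} solution on $G$. Because $F$ is a $k$-matching, $\Delta_F \le k$, so running \Call{Color}{} (Misra--Gries) to properly $(\Delta_F+1)$-edge color $G[F]$ uses at most $k+1$ colors. Applying \cref{lem:bmatching_to_kdm} with $\alpha = \frac{1}{2+\varepsilon}$, the collection $\calM$ of edges lying in the $k$ heaviest color classes is a $(\frac{1}{2+\varepsilon})(1-\frac{1}{k+1})$-approximate solution to \kdm{} on $G$, and this is exactly what the algorithm outputs. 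For the pass count, only \Call{SS-$b$M}{} reads the stream, and it does so in a single pass; the edge coloring and the selection of the heaviest classes operate entirely on the stored subgraph $G[F]$ after the stream has been consumed.

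For the space bound, the dominant cost is the memory of \Call{SS-$b$M}{}. The variant we invoke stores $\bigOlog{\abs{F_{\max}} \log_{1+\varepsilon}(\nicefrac{W}{\varepsilon})}$ bits, where $\abs{F_{\max}}$ is the size of a maximum cardinality $b$-matching. With $b \equiv k$ we have $\abs{F_{\max}} = \bigO{nk}$, and since $\varepsilon$ is constant and $W = \poly n$ the factor $\log_{1+\varepsilon}(\nicefrac{W}{\varepsilon})$ is $\bigO{\log n}$; hence this is $\bigO{nk\log^2 n}$ bits. Once the stream has been consumed, $G[F]$ has $\bigO{nk}$ edges, so storing it together with its weights and color labels takes $\bigO{nk\log n}$ bits; \Call{Color}{} runs in space linear in the size of its input graph, and picking the $k$ heaviest of the $k+1$ color classes only requires accumulating $k+1$ weight sums. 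Thus the overall space is $\bigO{nk\log^2 n}$ bits.

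I expect the only subtlety to be the space accounting for the Huang--Sellier subroutine specialized to $b \equiv k$: namely justifying $\abs{F_{\max}} = \bigO{nk}$ and the collapse of $\log_{1+\varepsilon}(\nicefrac{W}{\varepsilon})$ to $\bigO{\log n}$ under the standing assumptions, and checking that the post-processing step (which is where this approach departs from plain $b$-matching) stays within the same budget. Everything else is a direct substitution into \cref{lem:bmatching_to_kdm} together with the stated guarantees of the two cited subroutines.
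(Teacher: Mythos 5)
Your proposal is correct and follows essentially the same route as the paper: instantiate the Huang--Sellier semi-streaming \mwbm{} algorithm with $b(v)=k$, feed the resulting $k$-matching into the reduction of \cref{lem:bmatching_to_kdm}, and observe that the space is dominated by the streaming subroutine since $G[F]$ has only $\bigO{nk}$ edges. The extra details you flag (that $\abs{F_{\max}} = \bigO{nk}$ and that $\log_{1+\varepsilon}(\nicefrac{W}{\varepsilon}) = \bigO{\log n}$ for constant $\varepsilon$ and $W = \poly{n}$) are exactly the points the paper handles in its discussion of the Huang--Sellier algorithm, so nothing is missing.
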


The streaming phase requires $\bigO{k}$ processing time per
edge,  while constructing the $k$-matching $F$ takes $\bigO{m}$ time. By definition of a $k$-matching, $\abs{F} = \bigO{kn}$, so the post-processing 
coloring step requires $\bigO{kn^2}$ time. 
Thus the time complexity of \cref{alg:kmatching-to-kdm} is $\bigO{km + kn^2}$. 

\begin{algorithm}[t]

\caption{Semi-Streaming \kdm{} via \mwkm{}} 
\label{alg:kmatching-to-kdm}

\quad\; \textbf{Input:} A stream of edges $E$, an integer $k$, and a constant $\eps > 0$ \\
\indent \quad\; \textbf{Output:} A $(\frac{1}{2+\varepsilon})(1 - \frac{1}{k+1})$-approximate $k$-disjoint matching $\calM$ using $\bigOlog{nk}$ bits of space\\
\begin{minipage}[t]{0.5\linewidth}
\begin{algorithmic}[1]
    \LComment{Initialization}
    \State{$\forall v \in V \colon b(v) \gets k$}
    \Statex
    \LComment{Streaming Phase}
    \State{$F \gets $ \Call{SS-$b$M}{$E$, $b$, $\frac{\varepsilon}{2}$}} \Comment{\mwkm{}}
\end{algorithmic}
\end{minipage}
\hfill
\begin{minipage}[t]{0.49\linewidth}
\begin{algorithmic}[1]
\setcounter{ALG@line}{4}
\LComment{Post-Processing}
        \State{$\Delta_F \gets \max_{v \in V} \degree[F]{v}$} \Comment{$\Delta_F \leq k$}
        \State{$\calC \gets$ \Call{Color}{$G[F]$}} \Comment{Uses colors $[\Delta_F + 1]$}
        \If{$\Delta_F + 1 = k + 1$} %
            \State{Let $k+1$ be the color class with min weight}
        \EndIf
        \State{$\forall i \in [k] \colon M_i \gets \lrc{e \in F \colon \calC(e) = i}$}
        \State{\Return $\calM = \lrc{M_1, \ldots, M_k}$}
\end{algorithmic}
\end{minipage}
\end{algorithm}

    \section{Heuristic Improvements} \label{sec:heur}

In this section, we describe some heuristics we employ
to speed up and improve the weight of both streaming algorithms we have presented.

\paragraph{Dynamic Programming (DP) Based Weight Improvement} \label{subsec:heur_stk}
Manne and Halappanavar \cite{ManneH14} have proposed a general scheme to enhance the weight of a matching by  
computing \emph{two} edge-disjoint matchings $M_1$ and $M_2$. The induced subgraph $G[M_1 \cup M_2]$
contains only cycles of even length or paths. Utilizing a linear-time dynamic programming approach, 
an optimal matching $M^\prime$ can be derived from the induced graph $G[M_1 \cup M_2]$.
The weight of $M^\prime$ is  guaranteed only to be at least as large as $\max \lrc{w(M_1), w(M_2)}$, 
but in practice this heuristic results in substantially improved weight.

We adapted this method for \cref{alg:stk-djm}
as follows: instead of computing a $k$-disjoint matching, 
we first compute a $2k$-disjoint matching. These $2k$ matchings are then merged into $k$ matchings. 
While various strategies can be used for this merging process, we have merged the $i^{\text{th}}$ matching with the $(2k-i+1)^{\text{th}}$ matching, for $i \in [k]$. This approach does not change the asymptotic memory or time complexities for streaming algorithms since each merge requires only \bigO{n} time and space. 

\paragraph{Common Color and Merge} \label{subsec:heur_stkb}
For the $b$-matching based \cref{alg:kmatching-to-kdm}, we used two heuristics. 
The first is the \emph{common color} heuristic 
described by Hanauer \etal\cite{hanauer2022fast}, %
which attempts to color an edge by first determining if there is a common free 
color on both of its endpoints before going through the Misra-Gries routine. 
The second is the \emph{merge} heuristic, 
which is used when the number of color classes is $k+1$; it tries to improve the solution weight by merging the lowest- and second-lowest-weight color classes instead of completely
discarding the lowest-weight one, again %
through the dynamic programming approach described above. 

    \section{Experiments and Results} \label{sec:experiments}
This section reports experimental results for $95$  real-world and synthetic graphs. 
All the codes were executed on 
a node of a community cluster computer with 128 cores in the node, where the node is an AMD EPYC 7662 with 1 TB of total memory over all the cores. 
The machine has three levels of cache memory. 
The L1 data and instruction caches, the L2 cache, and the L3 cache have 4 MB, 32 MB, and 256 MB of memory, respectively. The page size of the node is 4 KB. 

Our implementation\footnote{Source code: \url{https://github.com/smferdous1/GraST}} uses \cpp{17} and is compiled with \gpp{9.3.0} with the \texttt{-O3} optimization flag. 
The streaming algorithms are simulated by sequentially reading and processing edges from a file using the \cpp{} \texttt{fstream} class. 
We compare them against several offline algorithms in the DJ-Match software suite developed by Hanauer \etal\cite{kathrin_hanauer_2022_dj_match}. 
All the streaming and offline algorithms are sequential, and the reported runtimes \emph{do not} include file reading times 
and (for the offline algorithms) graph construction times. For memory, we use the \texttt{getrusage} system call to report 
the maximum resident set size (RSS) during the program's execution. 

\subsection{Datasets and Benchmark Algorithms}

\paragraph{Real-World and Synthetic Graphs}
Following \cite{hanauer2022fast,khan2018adaptive}, we include ten weighted graphs from the SuiteSparse Matrix Collection \cite{Davis2011} labeled as \tsmall{}. 
Similar to \cite{hanauer2022fast}, we also generated 66 synthetic instances, labeled as \gfive{}, using the R-MAT model \cite{chakrabarti2004rmat} with $2^x$ vertices, 
where $x \in [10, 11, \ldots, 20]$. We used three initiator matrices, $\mathsf{rmat_{\text{b}}} = (0.55, 0.15, 0.15, 0.15)$, 
$\mathsf{rmat_{\text{g}}} = (0.45, 0.15, 0.15, 0.25)$, and $\mathsf{rmat_{\text{er}}} = (0.25, 0.25, 0.25, 0.25)$. 
For all these graphs, we assign real-valued random weights in the range $[1, 2^{19}]$ drawn from uniform or exponential distributions. 
Our \tlarge{} dataset consists of six of the \emph{largest} undirected graphs in the SuiteSparse Matrix Collection~\cite{Davis2011}, each having more than 1 billion edges. For %
the unweighted graphs, we assign uniform random real weights in the range $[1,10^{6}]$. 
In \cref{subsec:dataset}, we list the sizes and degree measures of these graphs in 
\cref{tab:dataset,tab:rmat_graph_stats}. 

\paragraph{Network Trace Data} 
Similar to~\cite{hanauer2022fast}, our network trace (\tdata{}) dataset consists of 
\begin{enumerate*}[noitemsep, label=\roman*)]
    \item Facebook Data Traces~\cite{roy2015inside}: Six production-level traces of three clusters from Facebook’s Altoona Data Center, 
    \item HPC Data~\cite{avin2020complexity}: MPI traces for four different applications run in parallel, 
    \item pFabric Data~\cite{alizadeh2013pfabric,avin2020complexity}: Three synthetic pFabric traces generated from Poisson processes with flow rates in \{0.1, 0.5, 0.8\}.
\end{enumerate*}
From these trace data, we pre-compute graphs by assigning the total demand of a pair of nodes (i.e., the number of times they appear in the trace) as the edge weight. 
In \cref{subsec:dataset}, we list detailed statistics of these generated graphs in 
\cref{tab:dataset}.

\begin{table}[t]
    \centering
    \footnotesize
    \begin{tabular}{@{}lllr@{}}\toprule
        Algorithm & Heuristics & Approx. & Time Complexity \\
        \midrule
        \git{}    & LS & $\nicefrac{1}{2}$ & $\bigO{\textrm{srt}(m)+km}$ \\
        \gpait{}  & LS & $\leq \nicefrac{1}{2}$ & $\bigO{\textrm{srt}(m)+km}$ \\
        \nc{}   & $\theta=0.2$, Agg=sum & $\leq \nicefrac{1}{2}$ & $\bigO{\textrm{srt}(n) + n \cdot \textrm{srt}(\Delta) + km}$ \\
        \kec{}     &   CC-RL    & $\leq \nicefrac{1}{2}$ & $\bigO{\textrm{srt}(m)+kn^2}$ \\
        \midrule
        \stk{} & \textsc{DP} & $\frac{1}{3+\epsilon}$ & $\bigO{km}$ \\
        \stkbnh{} & CC-M & $\frac{k}{(2+\varepsilon)(k+1)}$ & $\bigO{km + kn^2}$ \\
        \bottomrule
    \end{tabular}
    
    \caption{
        Benchmark approximation algorithms. LS: Local swaps, CC: Common color, RL: Rotate long, M: Merge, Agg: Aggregation, 
        $\text{srt}(x)$: Time complexity of sorting $x$ elements.  
    }
    \label{tab:benchmark}
\end{table}

\paragraph{Benchmark Algorithms and Heuristics} 
We summarize the algorithms we compare in \cref{tab:benchmark}.
For our streaming algorithms, we use \stk{} to denote the primal-dual based \cref{alg:stk-djm},
and \textsc{Stkb} to denote the $b$-matching based 
\cref{alg:kmatching-to-kdm}. 
We compare these against four of the offline algorithms that were determined to be the most practical (in terms of runtime and solution quality)
by Hanauer et al.~\cite{hanauer2022fast}.
These include the iterative Greedy (\git{}) and iterative Global Paths algorithms (\gpait{}), 
the NodeCentered algorithm (\nc{}), and the $k$-Edge Coloring algorithm (\kec{}) that we have described in 
\cref{sec:related}. 
For these four offline algorithms, we use the heuristics and post-processing steps recommended in \cite{hanauer2022fast}, which we list in \cref{tab:benchmark}. 
We refer to \cite{hanauer2022fast} for a detailed description of these heuristics. 
For our semi-streaming algorithms, we implement the three heuristics described in \cref{sec:heur}. 
We use \textsc{dp} to denote the dynamic programming heuristic for the \stk{} algorithm, and \textsc{cc} and \textsc{m} for the 
common color and merge heuristics, respectively, for the \textsc{Stkb} algorithm.

\subsection{Comparison of Streaming Algorithms} \label{subsec:stcom}

We first compare six variants of our streaming algorithms amongst themselves. 
For the primal-dual  approach, we include the standard \stk{} algorithm and the \stkdp{} heuristic. %
For the $ b$-matching-based approach, we have the CC (common color) and M (merge) heuristics in addition to the standard \stkbnh{} algorithm, for a total of four combinations. 

In \cref{fig:tamu-stream-summary}, we show the relative quality results on the \tsmall{} graphs for the streaming algorithms. We set $\varepsilon = 0.001$ and
tested with $k \in \lrc{2,4,8,16,32,64,96}$, but observed that beyond $k=32$, all the algorithms computed similar weights,
as at this point, the solutions likely contained nearly the entire graph. 
Hence, we only report results up to $k=32$.
For each graph, algorithm, and $k$ value combination, we conduct \emph{five} runs and record the mean runtime, memory usage, and solution weight. 
We calculate \emph{relative time} by taking the ratio of the mean runtime for each algorithm to the mean runtime 
of a baseline algorithm. 
\emph{Relative memory} and \emph{relative weight} are similarly computed. 
We choose \stk{} as the baseline algorithm for runtime and memory comparisons 
and \stkdp{} as the baseline for weight comparisons. 
We show geometric means of the relative weights computed by each algorithm across all graphs and $k$ value combinations as box plots in \cref{fig:tamu-stream-summary} (a). 
The relative time and relative memory metrics across increasing $k$ values are plotted in
\cref{fig:tamu-stream-summary} (b) and (c), respectively. 

The relative weight of \stkdp{} is always one, so we do not show it in the plot. In terms of median relative weight (the red line), \stk{} is the second best, and \stkbccm{} is the third best. Surprisingly, 
while the worst-case approximation guarantee of the primal-dual-based approach is weaker than the $ b$-matching-based approaches, it provides weights that are better than the latter in nearly all instances. For runtimes, we see that the fastest algorithm is \stk{}, while the slowest are \stkbnh{} and \stkbm{}. 
\stkbcc{} and \stkbccm{} both have similar runtimes and are faster than \stkbnh{} and \stkbm{}. 
The runtime of \stkdp{} is between \stk{} and \stkb{}. 
In terms of memory usage, \stk{} requires the least, while \stkdp{} requires roughly twice as much memory as \stk{}($1.76$ -- $1.86 \times$ across $k$). 
The other four $ b$-matching-based algorithms behave similarly to each other and are worse than both \stk{} and \stkdp{}. 

From this experiment, \emph{we conclude that among these six streaming algorithm variants, the best three are \stk{}, \stkdp{}, and \stkbccm{}}. 
Hence, all the remaining experiments will report results only for these three variants of the streaming algorithms.

\begin{figure}[t!]
    \centering
    \includegraphics[width=0.95\textwidth]{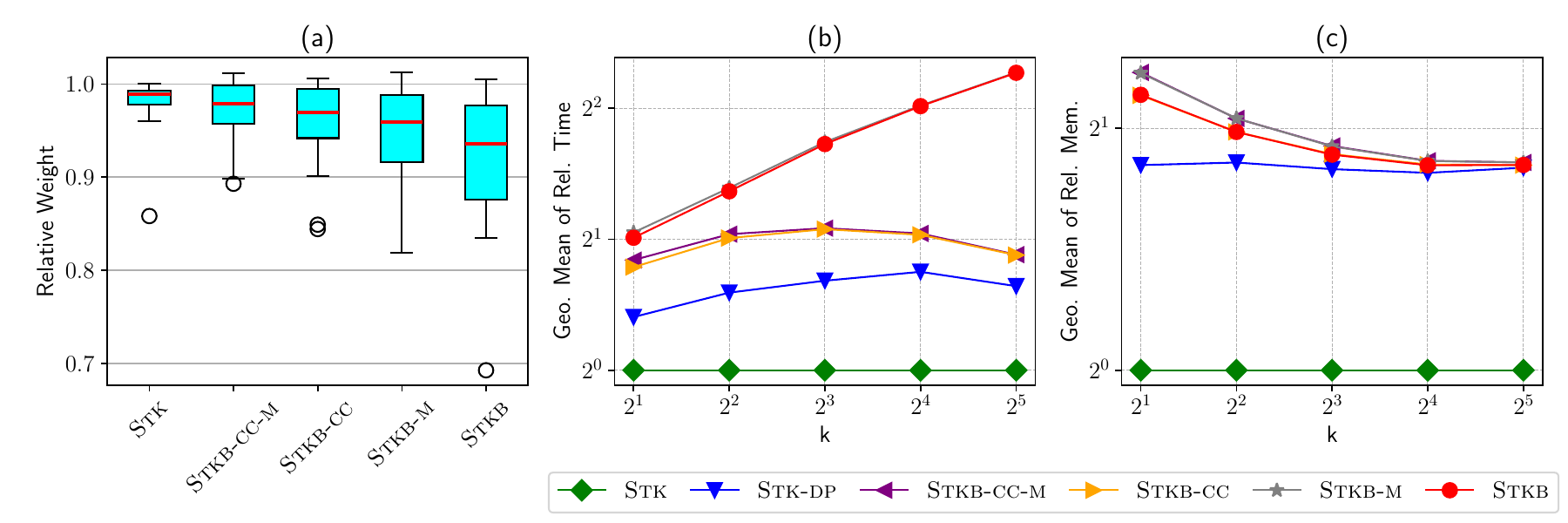}
    \caption{
        Summary plots for \tsmall{} instances on different streaming algorithms with $\varepsilon = 0.001$. 
        Plot (a) is a boxplot of relative weights across all instances and $k$ values for each algorithm. 
        Plots (b) and (c) give the geometric mean of the relative time and memory, respectively, across all instances with 
        increasing $k$ values. \stk{} is the baseline algorithm for relative time and memory, while \stkdp{} is the 
        baseline for relative weight.
    }
    \label{fig:tamu-stream-summary}
\end{figure}

\subsection{Comparison with Offline Algorithms}
\label{subsec:allcom}

Next, we compare the three streaming algorithms with the four offline algorithms listed in 
\cref{tab:benchmark}. We show the relative runtime, memory, and weight 
plots for the algorithms on the \tsmall{} dataset in \cref{fig:tamu-summary}. 
In \cref{subsec:experimental_results} we show the same metrics for the algorithms 
on the \gfive{} dataset in \cref{fig:g500-summary}. 
We follow the experimental settings and computations as in \cref{subsec:stcom} with 
\stk{} as the baseline for relative time and memory results, and \gpait{} with local swaps as the baseline for weight results, as these generally performed the best on their 
respective metrics.

We first discuss the \tsmall{} graph results. 
\emph{All of the streaming algorithms are significantly faster than the offline ones.} The fastest among these is the \stk{} algorithm, while the slowest is the $b$-matching based \stkb{}. Among the offline algorithms, \gpait{} is the slowest, more than $20 \times$ slower 
in geometric mean than \stk{}, while \git{} is more than $15 \times$ slower. The other two algorithms are relatively faster with similar 
runtimes but still slower than all streaming algorithms. The speedup for \stk{} w.r.t to \nc{} and \kec{} ranges from 3 to 11 across $k$. As an  example, for $k=8$, both \nc{} and \kec{} are more than $6\times$ slower 
than \stk{}. We also observe that both \nc{} and \kec{} get relatively more efficient as $k$ increases, which was also reported in \cite{hanauer2022fast}. 
For the memory results, we see that \stk{} requires the least, while the other two streaming algorithms take almost twice the memory, on average. 
All the offline algorithms behave similarly in terms of memory consumption since they all need to store the entire graph, which dominates the total memory consumption.  
\emph{We see a substantial memory reduction when using the streaming algorithms, with improvement ranging from 114$\times$ to 11$\times$ in geometric mean across $k$.} 
For smaller values of $k$ this reduction is more pronounced.
 
We now focus on the case $k=8$. All the streaming algorithms consume at least 16$ \times$ less memory than the offline algorithms, while for \stk{} it is $32 \times$. For the largest graph (kron\_g500-logn21) in this set, we see all the offline algorithms require at least 45 GB of memory while the streaming algorithms consume less than 1GB of memory. \emph{We emphasize that the higher memory requirement of the offline algorithms prohibits them from being run on larger datasets}, as we will see later. While the streaming algorithms are very efficient in terms of memory and time, we also see they obtain reasonably high solution weights.  %
For the weight results, we set \gpait{} as the baseline algorithm; hence, we do not include it in the box plot. 
All the offline algorithms find heavier weights than the streaming algorithms; for the \nc{} and \kec{} algorithms, 
we see many outliers compared to the other algorithms. \emph{ Among  the streaming algorithms,  \stkdp{} obtains the heaviest weight, with only less than 4\% median deviation from the best weight.} 
For \stkdp{}, the geometric mean of relative weights is 0.96 at $k=2$ and improves to 0.97 at $k=32$. The corresponding 
geometric mean of relative weights for faster offline algorithms, \nc{} and \kec{}  are as follows: for $k=2$, the means are 0.96 and 
0.97, respectively, and for $k=32$, they are 0.97 and 0.98, respectively. This highlights \stkdp{}'s comparable quality to the 
closest practical offline alternatives. The \stk{} and \stkb{} algorithms compute weights where the median deviation from the 
best weight is less than 5\% and 6\%, respectively.  

\begin{figure}[t!]
    \centering
    \includegraphics[width=0.95\textwidth]{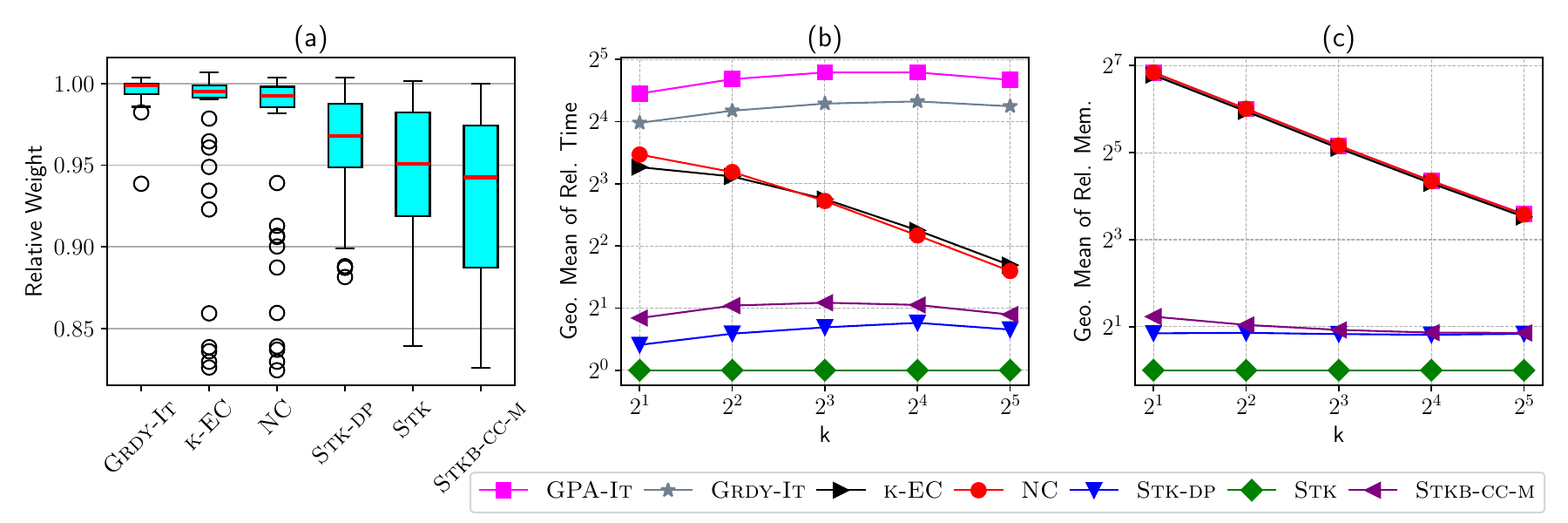}
    \caption{
        Summary plots the streaming and offline algorithms on \tsmall{} dataset with $\varepsilon = 0.001$ for the streaming algorithms. 
        Plot (a) is a boxplot of relative weights across all instances and $k$ values for each algorithm. 
        Plots (b) and (c) give the geometric mean of the relative time and memory, respectively, across all instances with 
        increasing $k$ values. \stk{} is the baseline algorithm for relative time and memory, while \gpait{} is the 
        baseline for relative weight.
    }
    \label{fig:tamu-summary}
\end{figure}

In \cref{subsec:experimental_results}, we include the results for similar experiments on the \gfive{} dataset in \cref{fig:g500-summary}. 
\emph{Overall, a similar conclusion can be drawn as the \tsmall{} instances.} The random graphs generated are much smaller than the \tsmall{} 
instances, and hence the memory improvements obtained by the streaming algorithms are smaller (6$\times$ to 38$\times$ in geometric mean). 
\emph{For the \gfive{} instances, the streaming algorithms obtain better quality results than the \tsmall{} instances}. The difference 
between the streaming and the \nc{} and \kec{} algorithms is \emph{smaller} than seen in the \tsmall{} instances. Both \nc{} and \stkdp{} 
achieve similar relative weights, while \kec{} is marginally (within 1\%) better.

\subsection{\tlarge{} Graph Results}

\begin{table*}[t]
\small
\resizebox{0.95\textwidth}{!}{%
\begin{tabular}{@{}lrrrrrrrrr@{}}
     \toprule
     & \multicolumn{3}{c}{\stk{}} & \multicolumn{3}{c}{\stkdp{}} & \multicolumn{3}{c}{\stkb{}}\\ 
     \cmidrule(lr){2-4} \cmidrule(lr){5-7} \cmidrule(l){8-10}
     Graph &
     Time (s) &
     Weight &
     \begin{tabular}[c]{@{}r@{}}Mem.\\ (GB)\end{tabular} &
     \begin{tabular}[c]{@{}r@{}}Rel.\\ Time\end{tabular} &
     \begin{tabular}[c]{@{}r@{}}\% Wt. \\ Imprv.\end{tabular} &
     \begin{tabular}[c]{@{}r@{}}Rel. \\ Mem.\end{tabular} &
     \begin{tabular}[c]{@{}r@{}}Rel.\\ Time\end{tabular} &
     \begin{tabular}[c]{@{}r@{}}\% Wt. \\ Imprv.\end{tabular} &
     \begin{tabular}[c]{@{}r@{}}Rel.\\ Mem.\end{tabular} \\ 
     \midrule
     AGATHA\_2015   & 1377.54 & 1.60e+14 & 49.41 & 1.64 & 0.67 & 1.90 & 0.99   & -1.51   & 1.69  \\
     MOLIERE\_2016  & 736.75  & 8.26e+6  & 23.28 & 1.64 & 2.03 & 1.78 & 1.48   & 0.26    & 1.22  \\
     GAP-kron       & 629.37  & 1.20e+10 & 29.66 & 1.85 & 3.05 & 1.90 & 1.06   & -0.46   & 1.62  \\
     GAP-urand      & 679.73  & 9.83e+10 & 53.25 & 1.67 & 3.71 & 1.57 & 2.11   & -1.30   & 1.75  \\
     com-Friendster & 475.13  & 1.02e+14 & 22.66 & 1.62 & 2.84 & 1.81 & 1.49   & -4.58   & 1.54  \\
     mycielskian20  & 86.14   & 1.99e+12 & 0.65  & 2.34 & 5.30 & 2.17 & 0.98   & -10.10  & 1.03  \\ \bottomrule
\end{tabular}}
\caption{Comparison of streaming algorithms for $k=8$ and $\varepsilon = 0.001$ on \tlarge{} graphs.}
\label{tab:st-large}
\end{table*}

We now discuss our \tlarge{} graph experiments. %
Since these graphs require longer runtimes, and our experiments on the smaller graphs reveal little deviation 
in runtime and memory across runs (the weight remains constant as our algorithms are deterministic), 
we report in \cref{tab:st-large} the results of a \emph{single run} of our streaming algorithms. 
We chose $k=8$ and set $\varepsilon = 0.001$ for this experiment. The first three columns represent the time in seconds, weight, and memory in GB for 
the baseline \stk{} algorithm, while the next six columns represent the relative metrics for the \stkdp{} and \stkb{} algorithms. 
For all the instances, using \stkdp{} yields an increase in solution quality over \stk{}, with the average increase being 2.93\%. 
Consistent with the results on smaller graphs, \stkb{} obtains the lowest weight among the streaming algorithms with weight decreasing 
in almost all the instances compared to \stk{} and the average decrease is 2.95\%. In terms of memory and runtime, 
\stkdp{} and \stkb{} require at most twice as much memory and time as the \stk{} algorithm. The geometric mean of relative memory and runtime of \stkdp{} is 1.85 and 1.78, respectively, and for \stkb{} they are 1.45 and 1.30, respectively.

For the offline algorithms, we chose \nc{} and \kec{},  since the previous experiments show they have much lower runtimes than the other two iterative matching algorithms. 
\emph{These algorithms could only be run on the smallest graph in this dataset (mycielskian20) while respecting the $1$ TB memory limit.} 
For this graph, \kec{} and \nc{} obtained weights of 1.70e+12 and 1.68e+12, respectively, which are around $18\%$ less than \stkdp{}. 
The \kec{} algorithm required more than two hours to compute a solution, while \nc{} required about twenty minutes. 
This is much worse than any of the streaming algorithms, as even the slowest one (\stkdp{}) required 
less than four minutes. 
Both the \nc{} and \kec{} algorithms used around 640 GB of memory, while the memory usage of the streaming 
algorithms ranges from $660$ MB for \stk{} to $1.4$ GB for \stkdp{}, which provides at least a $450$-fold reduction.

\paragraph{Effect of varying $\varepsilon$} 
In \cref{subsec:experimental_results}, we show experimental results highlighting 
the effects of varying $\varepsilon$ on the \tlarge{} graphs for the \stkdp{} algorithm in \cref{fig:eps-effect}.
The $\varepsilon$ parameter influences both the memory consumption and weight of the solution returned 
by the algorithm, and we find that as expected, increasing $\varepsilon$ decreases both of these values. However, in 
almost all cases, the decrease in weight is relatively much smaller than the decrease in memory, which suggests that 
using larger values of $\varepsilon$ in practice can substantially decrease the memory usage of the algorithm without significantly decreasing the weight of the solution returned.

\subsection{\tdata{} Graph Results}
\Cref{fig:dcn-fb-summary} shows experimental results for the graphs generated from the 
Facebook datacenter data. 
In \cref{subsec:experimental_results} we show the experimental results for 
the HPC and pFabric data in \cref{fig:dcn-hpc-summary,fig:dcn-pFabric-summary}, respectively. 
We use the same baseline algorithms and similar setup as the \tsmall{} dataset experiments. Overall the conclusion is similar to the earlier experiments, 
except that for these graphs, \stkb{} is the fastest. This is because the edge coloring step in the post-processing for the Facebook graphs is much faster 
than for the other graphs. For \stkdp{}, \stk{} and \stkb{} the median values of the geometric means of the relative weights are 0.96, 0.94, and 0.92, respectively. 
There are also substantial runtime and memory (10$\times$ -- 512$\times$) improvements compared to the offline algorithms.

\begin{figure}[t]
    \centering
    \includegraphics[width=0.95\textwidth]{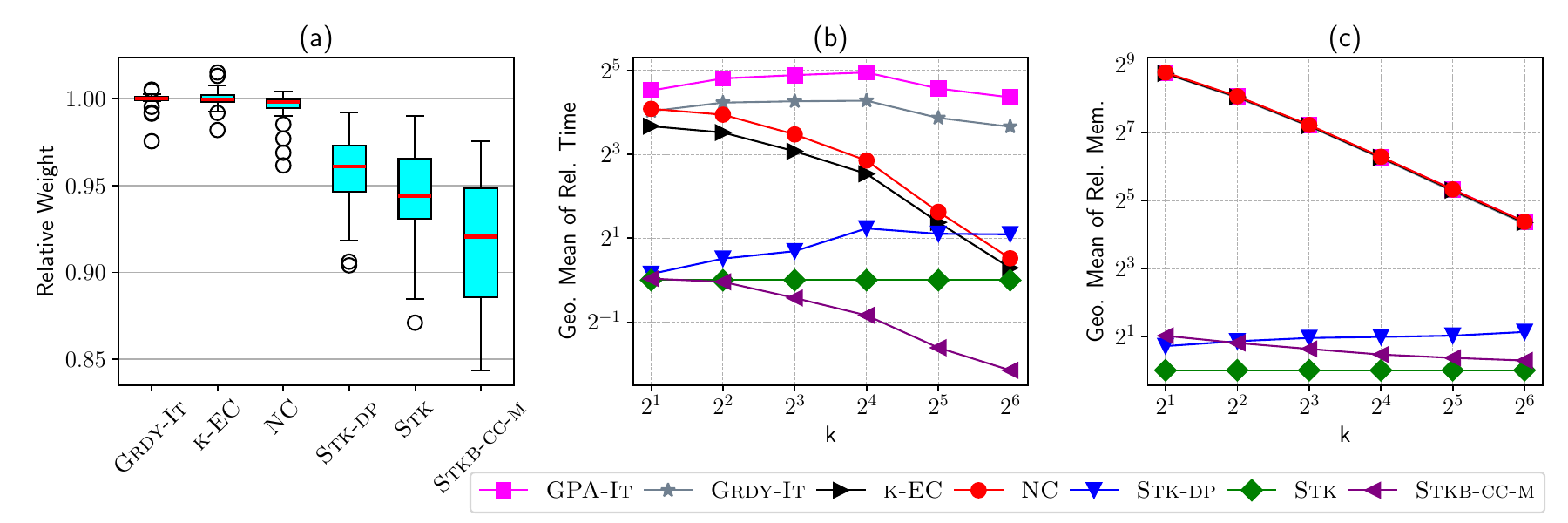}
    \caption{
        Summary plots the streaming and offline algorithms on Facebook \tdata{} dataset with $\varepsilon = 0.001$ for the streaming algorithms.
        Plot (a) is a boxplot of relative weights across all instances and $k$ values for each algorithm. 
        Plots (b) and (c) give the geometric mean of the relative time and memory, respectively, across all instances with 
        increasing $k$ values. \stk{} is the baseline algorithm for relative time and memory, while \gpait{} is the 
        baseline for relative weight.
    }
    \label{fig:dcn-fb-summary}
\end{figure}

    \section{Conclusions and Future Work}
Earlier work on offline maximum weight matching algorithms showed that exact algorithms do not terminate on
graphs with hundreds of millions of edges. Hence, offline approximation algorithms with near-linear time complexities based on short augmentations were designed~\cite{pothen2019approximation}. 
However, our results show that on graphs with billions of edges, even these algorithms require over $1$ TB of memory for the \kdm{} problem, and do not terminate on such
graphs.

Streaming algorithms are designed to reduce  memory usage, and our streaming \kdm{} algorithms 
effectively reduce it by one to two orders of magnitude on our test set. 
Our results also show that the streaming algorithms are theoretically and empirically faster. 
In particular, \emph{we conclude that the \stkdp{} algorithm is the best performer since it only requires modestly more memory and runtime 
than the \stk{} algorithm while still computing solutions comparable (within 5\%) to the best offline algorithm}. 
Despite its weaker worst-case approximation ratio, we also find that \stk{} consistently outperforms \stkb{} in solution weight. %
This raises the question of whether the approximation ratio of \stk{} could be improved to $\frac{1}{2+\varepsilon}$.

    \newpage
    \bibliography{refs}

    \newpage
    \appendix 
    \section{Applications and Offline Algorithms}
\label{sec:kdm-app-offline}

\paragraph{General Applications} 
As stated before, unweighted \kdm{} has been previously studied in \cite{cockayne1978linear,feige2002approximating}.
Feige \etal\cite{feige2002approximating} motivate the problem by an application in scheduling connections in satellite networks. 
Cockayne \etal\cite{cockayne1978linear} motivate unweighted \textsf{$2$-DM} in bipartite graphs with a job assignment problem where a 
max cardinality 2-disjoint matching implies that on two successive days, maximum assignments can be scheduled such that no person performs the same 
job twice. This easily generalizes to a weighted setting where the goal is to schedule these disjoint assignments while maximizing some utility. 
As \kdm{} is closely related to \mwbm{} (with $b(v)=k$ for all $v \in V$), it is also potentially relevant to applications where $b$-matchings are used, such as graph 
construction in machine learning \cite{jebara2009graph}, load balancing in parallel environments \cite{ferdous2021parallel}, and privacy preservation in datasets \cite{khan2018adaptive}.

\paragraph{\kdm{} in Reconfigurable Datacenter Networks} 

Network traffic in datacenters is growing explosively due to their relevance to data science and machine learning. Since fixed topologies for networks that route data are oblivious to dynamically changing data traffic, 
reconfigurable optical technologies offer a promising alternative to existing static network designs. They augment static datacenter networks with reconfigurable optical matchings, where one edge-disjoint optical matching is used for each optical circuit switch. These optical matchings provide direct connectivity between racks, which allows for heavy traffic demands (elephant flows) to be routed through them. The remainder of the traffic demands (mice flows) are routed on static networks. Thus optical matchings can be adapted to meet dynamic traffic demands and can exploit temporal and spatial structure in the traffic data. The underlying optimization problem then becomes how these optical matchings that carry elephant flows can be computed quickly and efficiently. 
Matchings and $b$-matchings in offline, online, and dynamic settings (but not the semi-streaming setting) have been explored in this context in previous work; see among 
others \cite{ballani2020sirius,bienkowski2021online,bienkowski2022online,elhayek2023dynamic,hanauer2023dynamic,hanauer2022fast,mellette2017rotornet}.
In particular, for the offline \kdm{} problem, which models the scenario where there are $k$ optical switches and some central control plane has access to a traffic demand matrix, Hanauer \etal\cite{hanauer2022fast} provided several approximation algorithms and experimental results.

\paragraph{Offline Approximation Algorithms for \mwm{} and \mwbm{}}

A number of offline approximation algorithms have been designed in recent years for \mwm{} and \mwbm{},
many of which have also been implemented with codes available. Among these algorithms are Greedy, Locally Dominant, Path-growing, GPA, Suitor, etc.
Two surveys describing this extensive body of work are found in  
\cite{Hougardy:survey,pothen2019approximation}. 
These studies show that matching algorithms that employ short augmentations lead to  constant-factor 
approximation ratios (e.g., $\frac{1}{2}$, $(\frac{2}{3}-\varepsilon)$)
and near linear time complexities; practically they are fast and compute solutions with weights a few percentages off from being optimal,
and outperform more involved algorithms with better worst-case approximation ratios (e.g.,~$(1- \varepsilon$)) both in terms of time \emph{and} matching weight. 

    \section{Related Algorithms} \label{sec:additional_algs}

\subsection{Semi-Streaming Matching}
\label{subsec:alg_streamMatch}

\newcommand{\primalmwm}{\hyperlink{primal-mwm}{\textsf{(P-MWM)}}}
\newcommand{\dualmwm}{\hyperlink{dual-mwm}{\textsf{(D-MWM)}}}

The breakthrough $\frac{1}{2+\varepsilon}$-approximate semi-streaming algorithm (\ps{}) for maximum weighted matching is due to Paz 
and Schwartzman~\cite{PazS19}. The original algorithm was analyzed using local ratio techniques, but Ghaffari 
and Wajc~\cite{GhaffariW19} later provided a simpler primal-dual analysis of the algorithm which we adopt here. 
The primal-dual formulation of the \mwm{} problem is shown in \primalmwm{} and \dualmwm{}.

\begin{figure}[b]
    \begin{minipage}[t]{.45\columnwidth}
    \begin{maxi*}
        {}{\sum_{e \in E} w(e) x(e) }{}{\hypertarget{primal-mwm}{\textsf{(P-MWM)}}}{} %
        \addConstraint{\sum_{e \in \delta(v)} x(e)}{\leq 1}{\; \forall v \in V}
        \addConstraint{ x(e)}{\geq 0}{\; \forall e \in E}.
    \end{maxi*}
    \end{minipage}%
    \begin{minipage}[t]{.55\columnwidth}
    \begin{mini*}
        {}{\sum_{v \in V} y(v)}{}{\hypertarget{dual-mwm}{\textsf{(D-MWM)}}}{} %
        \addConstraint{y(u) + y(v) }{\geq w(e)}{\; \forall e=(u, v) \in E}
        \addConstraint{y(v)}{\geq 0}{\; \forall v \in V}.
    \end{mini*}
    \end{minipage}
    \caption{LP Relaxation \protect\primalmwm{} of \mwm{} and its dual \protect\dualmwm{}.}
    \label{fig:lp_mwm}
\end{figure}

\begin{algorithm}[ht!]
\caption{Semi-Streaming \mwm{} \cite{PazS19,GhaffariW19}}
\label{alg:stmatch}
\quad\;\, \textbf{Input:} Stream of edges $E$, a constant $\varepsilon > 0$ \\
\indent \quad\;\, \textbf{Output:} A matching $M$, using $\bigO{\frac{n\log^2{n}}{\varepsilon}}$ bits of space\\
\begin{minipage}[t]{0.6\linewidth}
\begin{algorithmic}[1]
\LComment{Initialization}
\State $\forall v\in V: \phi(v) \gets 0$
\State $S \leftarrow \emptyset$; $M \leftarrow \emptyset$
\Statex
\LComment{Streaming Phase}
\For{$e(u,v) \in E$} 
    \If{$w(e) \geq (1+\varepsilon) (\phi(u) + \phi(v))$} \label{algline:mwm_cond}
        \State{$w^\prime(e) \gets w(e) - (\phi(u) + \phi(v))$}
        \State{$\phi(u) \gets \phi(u) + w^\prime(e)$; $\phi(v) \gets \phi(v) + w^\prime(e)$}
        \State{$S$.push(e)}
    \EndIf
\EndFor
\end{algorithmic}
\end{minipage}
\hfill
\begin{minipage}[t]{0.38\linewidth}
\begin{algorithmic}[1]
\setcounter{ALG@line}{8}
\LComment{Post-Processing}
\While{$S$ is not empty} 
    \State{$e = (u,v) \leftarrow $ $S$.pop()}
    \If{$V(M) \cap \{u,v\} = \emptyset$} 
        \State{$M \leftarrow M \cup \{e\}$}
    \EndIf
\EndWhile
\end{algorithmic}
\end{minipage}
\end{algorithm}

The method is shown in \cref{alg:stmatch}. It initializes the approximate dual variables (the vector $\phi$) to zero, 
and then processes the streaming edges one by one. When an edge $e$ arrives, the algorithm decides whether to store it in the set of candidate matching edges (the stack $S$) 
or to discard it. This decision is based on whether the dual constraint (shown in line \ref{algline:mwm_cond}) is approximately satisfied for this edge. If the edge is 
stored, we compute the reduced weight $w^\prime(e) = w(e) - (\phi(u) + \phi(v))$ and add it to both $\phi(u)$ and $\phi(v)$. 
Ghaffari and Wajc \cite{GhaffariW19} showed that as edges incident on a vertex $v$  re inserted into the stack $S$, they have weights that exponentially increase with the factor $1+\varepsilon$. 
Thus, for each vertex at most $\bigO{\log_{1+\varepsilon} W} = \bigO{\frac{\log W}{\varepsilon}} = \bigO{\frac{\log n}{\varepsilon}}$ edges are stored in $S$ 
(since we assume $W = \frac{w_{\max}}{w_{\min}}$ to be $\poly{n}$ in this paper), which implies the stack has size $\bigO{\frac{n\log n}{\varepsilon}}$. 
In the post-processing phase, the algorithm unwinds the stack and greedily constructs a maximal matching by processing edges in the stack order, in serial.
Through a primal-dual based analysis, Ghaffari and Wajc \cite{GhaffariW19} show that the resulting matching has an approximation ratio of $\frac{1}{2 + \varepsilon}$ (up to appropriately 
scaling the $\varepsilon$ factor by a constant). 

This algorithm was implemented by Ferdous \etal\cite{Ferdous+:2024a} and it was shown to reduce the memory requirements by one to two factors order of magnitude over offline $\frac{1}{2}$-approximation algorithms, 
while being close to the best of them in run time and matching weight. The matching weight was a few percent off the weight obtained from the offline algorithm that is currently the best for weights, a $(\frac{2}{3} - \varepsilon)$-approximation algorithm. 
However, on one of the largest problems with several billions of edges, the $(\frac{2}{3} - \varepsilon)$-approximation algorithm did not terminate even when run on a shared memory parallel computer with 1 TB of memory, while the streaming algorithm used 
less than $6$ GB of memory on a serial machine.

\subsection{Semi-Streaming \texorpdfstring{$b$}{b}-Matching} \label{sec:alg_streambmatch}

Recall that given a function $b \colon V \to \Z_+$, a $b$-matching is a set of edges $F \sse E$ such that each vertex $v \in V$ is incident on at most $b(v)$ edges in $F$. 
At a high level, the semi-streaming \mwbm{} algorithm of Huang and Sellier \cite{huang2021semi} works by maintaining a global stack of edges that is then greedily unwound to construct a $b$-matching. In particular, 
for a chosen parameter $\varepsilon > 0$, 
the size of the stack is bounded by $\bigO{\abs{F_{\max}} \cdot \log_{1+\varepsilon} (W\varepsilon^{-1})}$ edges, where $\abs{F_{\max}}$ is the maximum size of any $b$-matching in the graph and $W = \frac{w_{\max}}{w_{\min}}$ is the ratio of the maximum and minimum edge weights.
Additionally, it can be shown that within the edges of the stack there exists a $\frac{1}{2+\varepsilon}$-approximate $b$-matching, which is retrieved by greedily unwinding the stack. Note that for our specific usage, we have that $b(v) = k$ for all $v \in V$ and that $W = \poly{n}$, which implies that the algorithm stores $\bigO{nk \log n}$ edges for any constant $\varepsilon > 0$, and hence requires $\bigO{nk \log^2 n}$ bits of space. The pseudocode of a slight modification of the original semi-streaming algorithm is given in \cref{alg:ssmwbm}, and for the formal details on proofs of the space complexity and approximation factor we refer to Huang and Sellier \cite{huang2021semi}. 

Informally, \cref{alg:ssmwbm} maintains for each vertex $v \in V$ and $i \in [b(v)]$ a value $\phi(v, i)$ and a pointer $t_v(i)$ which are initially set to 0 and $\emptyset$, respectively, and a stack of edges $\calS$. 
Each vertex $v$ contributes at most $b(v)$ edges in the final solution, and so we can keep track of the $i^{\text{th}}$ chosen edge with the pointer $t_v(i)$. The $\phi(v, i)$ value can be interpreted as the gain in solution weight for the edge stored in $t_v(i)$. 
For each edge $e = (u, v)$ that streams in, we find the indices and values $q_u$, $\phi(u) \coloneqq \phi(u, q_u)$ and $q_v$, $\phi(v) \coloneqq \phi(v, q_v)$ that give the minimum among $\phi(u, i)$, $i \in [b(u)]$, and $\phi(v, j)$, $j \in [b(v)]$, respectively. The edge $e$ is then pushed to $\calS$ only if it satisfies $w(e) \geq (1 + \nicefrac{\varepsilon}{2})(\phi(u) + \phi(v))$. Intuitively, this enforces that the edges incident on each vertex $v \in V$ that get pushed to the stack have exponentially increasing gain in solution weight. If the edge $e$ satisfies this condition, then it will become the $q_u^{\text{th}}$ and $q_v^{\text{th}}$ chosen edges of $u$ and $v$, respectively, so we update the pointers $t_u(q_u)$ and $t_v(q_v)$ to $e$. We also increase the values of $\phi(u, q_u)$ and $\phi(v, q_v)$ by the reduced weight of $e$, which is given by $w'(e) = w(e) - (\phi(u) + \phi(v))$. Additionally, we keep pointers $p_u(e)$ and $p_v(e)$ to the $q_u^{\text{th}}$ chosen edge of $u$ and the $q_v^{\text{th}}$ chosen edge of $v$ that $e$ just replaced. 

Once all the edges have streamed and been processed, the $b$-matching can be retrieved by greedily unwinding the stack $\calS$. Specifically, for each edge in the stack we maintain a boolean flag that is initially true, and an edge can only be added to the solution if this flag is true. Clearly, the first popped edge must be added to the solution. For an edge $e = (u, v)$ that is added to the solution, we do pointer chasing on the pointers $p_u(e)$ and $p_v(e)$ until they both point to $\emptyset$, and for each edge found in the chases, we set their boolean flag to false. Intuitively, this can be seen as ignoring all the edges that were pushed earlier in the stack that at some point were the $q_u^{\text{th}}$ and $q_v^{\text{th}}$ chosen edges of $u$ and $v$, respectively. 

\begin{algorithm}[t]
    \caption{Semi-Streaming \mwbm{} \cite{huang2021semi}}
    \label{alg:ssmwbm}

    \quad\;\, \textbf{Input:} A stream of edges $E$, a function $b \colon V \to \Z_+$, and a constant $\eps > 0$ \\
    \indent \quad\;\, \textbf{Output:} A $\frac{1}{2+\varepsilon}$-approximate $b$-matching $F$ \\
    \begin{minipage}[t]{0.6\linewidth}
    \begin{algorithmic}[1]
        \LComment{Initialization}
        \State{$\calS \gets \emptyset$} \Comment{Global stack of edges}
        \For{$v \in V$} 
            \For{$i \in [b(v)]$}
                \State $\phi(v, i) \gets 0$, $t_v(i) \gets \emptyset$
            \EndFor
        \EndFor
        \Statex
        \LComment{Streaming Phase}
        \For{$e = (u, v) \in E$}
            \State $q_u \gets \argmin_{q \in [b(u)]} \lrc{\phi(u, q)}$, $\phi(u) \gets \phi(u, q_u)$
            \State $q_v \gets \argmin_{q \in [b(v)]} \lrc{\phi(v, q)}$, $\phi(v) \gets \phi(v, q_v)$

            \If{$w(e) \geq (1+\nicefrac{\varepsilon}{2})(\phi(u) + \phi(v))$}
                \State $w^\prime(e) \gets w(e) - (\phi(u) + \phi(v))$ \Comment{Gain of $e$}
                \State $\calS.\text{push}(e)$

                \State $p_u(e) \gets t_u(q_u)$, $p_v(e) \gets t_v(q_v)$ %
                \State $\phi(u, q_u) \gets \phi(u) + w'(e)$, $\phi(v, q_v) \gets \phi(v) + w'(e)$ %
                \State $t_u(q_u) \gets e$, $t_v(q_v) \gets e$ \Comment{Update pointers}
            \EndIf
        \EndFor
    \end{algorithmic}
    \end{minipage}
    \hfill
    \begin{minipage}[t]{0.35\linewidth}
    \begin{algorithmic}[1]
    \setcounter{ALG@line}{15}
        \LComment{Post-Processing}
        \State{$F \gets \emptyset$}
        \State{$\forall e \in \calS \colon a_e \gets $true}
        \While{$\calS \neq \emptyset$}
            \State $e = (u, v) \gets \calS.\text{pop()}$
            \If{$a_e = $ true}
                \State $F \gets F \cup \lrc{e}$
                \For{$x \in \lrc{u, v}$}
                    \State $c \gets e$
                    \While{$c \neq \emptyset$}
                        \State $a_c \gets$ false
                        \State $c \gets p_x(c)$
                    \EndWhile
                \EndFor
            \EndIf
        \EndWhile
        \State{\Return $F$}
    \end{algorithmic}
    \end{minipage}
\end{algorithm}

\subsection{(\texorpdfstring{$\Delta+1$}{Delta+1})-Edge Coloring} \label{sec:alg_misragries}

The edge coloring algorithm of Misra and Gries \cite{misra1992constructive} constructs a proper edge coloring $\calC \colon E \to [\Delta +1]$, i.e., a coloring such that for any two adjacent edges do not have the same color. In particular, it does so in $\bigO{nm}$ time using $\bigO{m}$ space. At a high level, the main procedure of the algorithm colors an uncolored edge while maintaining the invariants that colored edges will never become uncolored (but may change colors) and that if all colored edges at the start of the procedure form a proper coloring, then the resulting colored edges will also be proper. By a simple induction, the final coloring must therefore be proper if this procedure is iteratively applied on uncolored edges. The pseudocode of the algorithm is given in \cref{alg:edgecoloring}.
For technical details on the proof that it constructs a proper $(\Delta + 1)$-edge coloring, we refer to the original paper of Misra and Gries \cite{misra1992constructive}.

\begin{algorithm}[t]
    \caption{$(\Delta + 1)-$Edge Coloring \cite{misra1992constructive}}
    \label{alg:edgecoloring}

    \quad\;\, \textbf{Input:} A graph $G = (V, E)$ \\
    \indent \quad\;\, \textbf{Output:} A proper edge coloring $\calC \colon E \to \lrb{\Delta + 1}$ \\
    \begin{minipage}[t]{0.56\linewidth}
    \begin{algorithmic}[1]
        \State{$\Delta \gets \max_{v \in V}{\deg{v}}$} 
        \State{$\forall e \in E \colon \calC(e) \gets \perp$} \Comment{Each edge starts uncolored}
        \For{$e = (u, v) \in E$} 
            \If{$\calC(e) = \perp$} 
                \LComment{Common Color Heuristic}
                \State $\ell \gets$ color in $\lrb{\Delta+1} \cup \lrc{\perp}$ that is free on $u$ and $v$
                \If{$\ell \neq \perp$}
                    \State $\calC(e) \gets \ell$
                    \State \textbf{continue}
                \EndIf
                \State
                \State $F \gets \Call{MaximalFan}{u,\, v}$, $z \gets F.\text{back()}$
                \State $c \gets$ a color in $\lrb{\Delta+1}$ that is free on $u$
                \State $d \gets$ a color in $\lrb{\Delta+1}$ that is free on $z$
                \If{$d$ is not free on $u$}
                    \State \Call{CDPath}{$u$, $c$, $d$}
                    \State $F \gets$ \Call{ShrinkFan}{$F$, $d$}, $z \gets F.\text{back()}$
                \EndIf
                \State \Call{RotateFan}{$F$}, $e' \gets (u, z)$
                \State $\calC(e') \gets d$
            \EndIf
        \EndFor
        \State \Return $\calC$

        \Statex
        \Procedure{MaximalFan}{$u$, $v$}
            \State $F \gets \lrv{v}$ \Comment{Array}
            \State $T \gets \lrc{x \in N(u) \colon \calC(u, x) \neq \perp}$
            \While{$\exists x \in T$ such that $\calC(u, x)$ is free on $F$.back()}
                \State $F.\text{push}(x)$
                \State $T \gets T \sm \lrc{x}$
            \EndWhile
            \State \Return $F$
        \EndProcedure
    \end{algorithmic}
    \end{minipage}
    \hfill
    \begin{minipage}[t]{0.4\linewidth}
    \begin{algorithmic}[1]
    \setcounter{ALG@line}{26}
            \Procedure{CDPath}{$u$, $c$, $d$}

            \State $x \gets u$, $z \gets \emptyset$
            \State $q \gets d$
            \While{$q$ is not free on $x$}
                \State $y \gets$ vertex in $N(x) \sm z$ such that $\calC(x, y) = q$  
                \State $p \gets c$ \textbf{if} $q = d$ \textbf{else} $d$ 
                \State $\calC(x, y) \gets p$
                \State $z \gets x$, $x \gets y$, $q \gets p$
            \EndWhile
        \EndProcedure

        \Statex
        \Procedure{ShrinkFan}{$F$, $d$}
            \State $l \gets \abs{F}$
            \For{$i \in [l]$}
                \If{$d$ is free on $F[i]$} 
                    \State $F \gets \lrv{F[1], \ldots, F[i]}$
                    \State \textbf{break}
                \EndIf
            \EndFor 
            \State \Return $F$
        \EndProcedure
        
        \Statex
        \Procedure{RotateFan}{$F$}
            \State $l \gets \abs{F}$
            \For{$i \in [l-1]$}
                \State $x \gets F[i]$, $e_1 \gets (u, x)$
                \State $y \gets F[i+1]$, $e_2 \gets (u, y)$
                \State $\calC(e_1) \gets \calC(e_2)$
            \EndFor
            \State $z \gets F[l]$, $e_3 \gets (u, z)$
            \State $\calC(e_3) \gets \perp$
        \EndProcedure
    \end{algorithmic}
    \end{minipage}
\end{algorithm}

For a vertex $v \in V$, we say that a color $c \in [\Delta + 1]$ is \emph{free} on $v$ if there exists no edge $e'$ incident on $v$ such that $\calC(e') = c$. 
The main loop of \cref{alg:edgecoloring} for coloring an uncolored edge $e = (u, v)$ works as follows. As a heuristic, we can first check if there exists some color $\ell \in [\Delta + 1]$ that is free on both $u$ and $v$, and if so set $\calC(e) = \ell$ and continue to the next edge. Otherwise, we construct a data structure $F$ called a maximal fan. The fan $F$ is a maximal ordered list of neighbors of $u$ such that $F[1] = v$ and for $2 \leq i \leq l = \abs{F}$, the color $\calC(u, F[i]) \neq \perp$ and is free on $F[i-1]$. Once $F$ is constructed, we denote by $z$ the last vertex in $F$. We then find a color $c \in [\Delta+1]$ that is free on $u$ and a color $d \in [\Delta+1]$ that is free on $z$. Since $\degree{u}, \degree{z} \leq \Delta$, such colors must always exist. 

If $d$ is free on $u$, then we perform a rotation of the fan $F$, which involves circularly shifting the colors of the corresponding edges of the fan by one to the left, i.e., setting $\calC(u, F[i]) = \calC(u, F[i+1])$ for $1 \leq i \leq l-1$ and setting $\calC(u, z) = \perp$. 
Since we have that $d$ is free on both $u$ and $z$, we can safely set $\calC(u, z) = d$ and finish. Otherwise, if $d$ is not free on $u$, we first find a structure called a $cd_u$ path, which is simply a maximal path of edges starting at $u$ such that the colors of edges on the path alternate between $c$ and $d$. Note that since $c$ is free on $u$, this implies that the first edge $(u, x)$ on such a path must have color $d$ and additionally that $x$ must be in the fan before $z$ (as otherwise we could have increased the size of the fan by adding $x$ after $z$). Once such a path is found, we simply invert the colors of the edges on the path, i.e., set all edges with color $c$ to $d$ and vice versa. This operation now ensures that $d$ is free on $u$, but does not guarantee that $d$ remains free on $z$. To fix this, we can shrink the fan $F$ up to the first vertex in $F$ such that $d$ is free on it and update $z$ accordingly. We can safely do a rotation of the shrunken fan and set $\calC(u, z) = d$.

\section{Additional Experimental Details}
\label{sec:appendix_expr}

\subsection{Dataset Description} \label{subsec:dataset}

\cref{tab:dataset,tab:rmat_graph_stats} include sizes and degree measures for the 
95 graphs that we have reported results on.

\begin{table}[!ht]
\centering

\begin{subtable}[t]{\columnwidth}
    \centering
        \begin{tabular}{@{}l|rrrrr@{}}\toprule
            Graph &$n$ &$m$ &Avg. Deg. &Max. Deg.  & Min. Deg.\\\midrule
            mycielskian20 (U) &786.43 K &1.36 B &3446.42 &393,215 & 19 \\
            com-Friendster (U) &65.61 M &1.81 B &55.06 &5,214 & 1 \\
            GAP-kron (W) &134.22 M &2.11 B &31.47 &1,572,838 & 0  \\
            GAP-urand (W) &134.22 M &2.15 B &32 &68 & 6  \\
            MOLIERE\_2016 (W) &30.24 M &3.34 B &220.81 &2,106,904 & 0 \\
            Agatha\_2015 (U) &183.96 M &5.79 B &62.99 & 12,642,631 & 1 \\
            \bottomrule
        \end{tabular}
    \caption{\tlarge{} graph instances. U: Unweighted, W: Weighted Graph. K: Thousand, M: Million, B: Billion.}
    \label{tab:tlarge-data}
\end{subtable}%
\vspace{1em}
\begin{subtable}[t]{\columnwidth}
    \centering
        \begin{tabular}{@{}l|rrrrr@{}}
            \toprule
            Graph	& $n$ &	$m$ & Avg. Deg. & Max. Deg. & Min. Deg. \\
            \midrule
            astro-ph & 16,706 & 121,251 & 14.52 & 360 & 0 \\
            Reuters911 & 13,332 & 148,038 & 22.21 & 2,265 & 0\\
            cond-mat-2005 & 40,421 & 175,691 & 8.69 & 278 & 0 \\
            gas\_sensor & 66,917 & 818,224 & 24.45 & 32 & 7 \\
            turon\_m & 189,924 & 778,531 & 8.20 & 10 & 1 \\
            Fault\_639 & 638,802 & 13,303,571 & 41.65 & 266 & 0 \\
            mouse\_ gene & 45,101 & 14,461,095 & 641.27 & 8,031 & 0 \\
            bone010 & 986,703 & 23,432,540 & 47.50 & 62 & 11 \\
            dielFil.V3real & 1,102,824 & 44,101,598 & 79.98 & 269 & 8\\
            kron.logn21 & 2,097,152 & 91,040,932 & 86.82 & 213,904 & 0 \\
            \bottomrule
        \end{tabular}
    \caption{\tsmall{} graph instances. All the graphs are weighted.}
    \label{tab:tsmall-data}
\end{subtable}
\begin{subtable}[t]{\columnwidth}
    \centering
        \begin{tabular}{@{}l|rrrrrrrr@{}}\toprule
        Graph &$n$ &$m$ &Avg. &Max. &Min. &Max. &Min. &\#Trcs \\
& & &Deg. &Deg. &Deg. &Dem. &Dem. &(M) \\\toprule
FB\_ClusterA\_rack &13,733 &496,624 &72.33 &7,272 &1 &142,053 &1 &316 \\
FB\_ClusterB\_rack &18,897 &1,777,559 &188.13 &11,932 &1 &315,718 &1 &2,710 \\
FB\_ClusterC\_rack &27,358 &2,326,086 &170.05 &25,224 &1 &169,202 &1 &302 \\
FB\_ClusterA\_ip &357,059 &43,057,511 &241.18 &57,676 &1 &18,614 &1 &316 \\
FB\_ClusterB\_ip &4,963,141 &164,277,914 &66.20 &376,508 &1 &17,036 &1 &2,710 \\
FB\_ClusterC\_ip &990,023 &40,654,711 &82.13 &104,821 &1 &18,571 &1 &316 \\
\midrule
HPC1 &1,024 &3,797 &7.42 &21 &0 &1,060 &530 &3 \\
HPC2 &1,024 &15,095 &29.48 &36 &0 &2,071 &3 &22 \\
HPC3 &1,024 &37,908 &74.04 &1,022 &0 &48 &2 &1 \\
HPC4 &1,024 &10,603 &20.71 &26 &0 &1,690 &1690 &18 \\
\midrule
pFabric\_0.1 &144 &10,296 &143.00 &143 &143 &41,832 &1 &30 \\
pFabric\_0.5 &144 &10,296 &143.00 &143 &143 &43,897 &14 &30 \\
pFabric\_0.8 &144 &10,296 &143.00 &143 &143 &39,208 &14 &30 \\
\bottomrule
\end{tabular}
    \caption{\tdata{} instances. The edge weights are the number of occurrences of a pair of vertices in the trace data. Self loops are discarded. 
    Max.~Dem. and Min.~Dem. are the maximum and minimum weights on the edges representing the demands, while \#Trcs list the number of traces in millions.}
    \label{tab:trace-data}
\end{subtable}%

\caption{Graph statistics for \tlarge{}, \tsmall{} and \tdata{} instances.} \label{tab:dataset}
\end{table}

\begin{table*}[!ht]
    \centering
    \footnotesize
    \begin{tabular}{@{}lrrrrrrrrr@{}}
    \toprule
    & \multicolumn{3}{c}{$\mathsf{rmat_{\text{b}}}$}             & \multicolumn{3}{c}{$\mathsf{rmat_{\text{g}}}$}             & \multicolumn{3}{c}{$\mathsf{rmat_{\text{er}}}$}            \\ 
    \cmidrule(lr){2-4} \cmidrule(lr){5-7} \cmidrule(l){8-10} 
    $\log n$ & $m$       & Avg. Deg.   & $\Delta$    & $m$       & Avg Deg.   & $\Delta$ & $m$ & Avg  Deg. & $\Delta$ \\ \midrule
    10       & 7,939     & 15.51 & 372    & 7,960     & 15.55 & 85  & 8,185     & 15.99 & 28  \\
    11       & 16,025    & 15.65 & 615    & 16,030    & 15.65 & 120 & 16,377    & 15.99 & 31  \\
    12       & 32,302    & 15.77 & 864    & 32,282    & 15.76 & 171 & 32,761    & 15.99 & 34  \\
    13       & 64,917    & 15.85 & 1,280  & 64,884    & 15.84 & 160 & 65,531    & 15.99 & 35  \\
    14       & 130,214   & 15.90 & 1,700  & 130,169   & 15.90 & 199 & 131,067   & 15.99 & 35  \\
    15       & 260,905   & 15.92 & 2,502  & 260,886   & 15.92 & 261 & 262,133   & 15.99 & 35  \\
    16       & 522,434   & 15.94 & 3,471  & 522,451   & 15.94 & 276 & 524,273   & 15.99 & 34  \\
    17       & 1,046,118 & 15.96 & 5,085  & 1,045,962 & 15.96 & 416 & 1,048,561 & 15.99 & 36  \\
    18       & 2,093,484 & 15.97 & 7,029  & 2,093,526 & 15.97 & 465 & 2,097,142 & 15.99 & 41  \\
    19       & 4,189,181 & 15.98 & 10,222 & 4,189,155 & 15.98 & 606 & 4,194,296 & 15.99 & 38  \\
    20       & 8,381,379 & 15.99 & 14,374 & 8,381,431 & 15.99 & 644 & 8,388,594 & 15.99 & 37  \\ \bottomrule
    \end{tabular}
    \caption{Number of edges $m$, average and maximum degrees ($\Delta$) for the R-MAT graphs generated for each scale $x \in [10..20]$ and initiator matrix.}
    \label{tab:rmat_graph_stats}
\end{table*}

\subsection{Other Experimental Results} \label{subsec:experimental_results}

\cref{fig:g500-summary} shows the summary results for the \gfive{} graphs and 
\cref{fig:eps-effect} shows the relative weight and relative memory 
results for the \stkdp{} algorithm on the \tlarge{} dataset when using varying values of $\varepsilon$. 
The summary of experimental results for HPC and pFabric datacenter network \tdata{} graphs are shown in \cref{fig:dcn-hpc-summary,fig:dcn-pFabric-summary}, respectively.

\begin{figure}[!ht]
    \centering
    \includegraphics[width=0.9\linewidth]{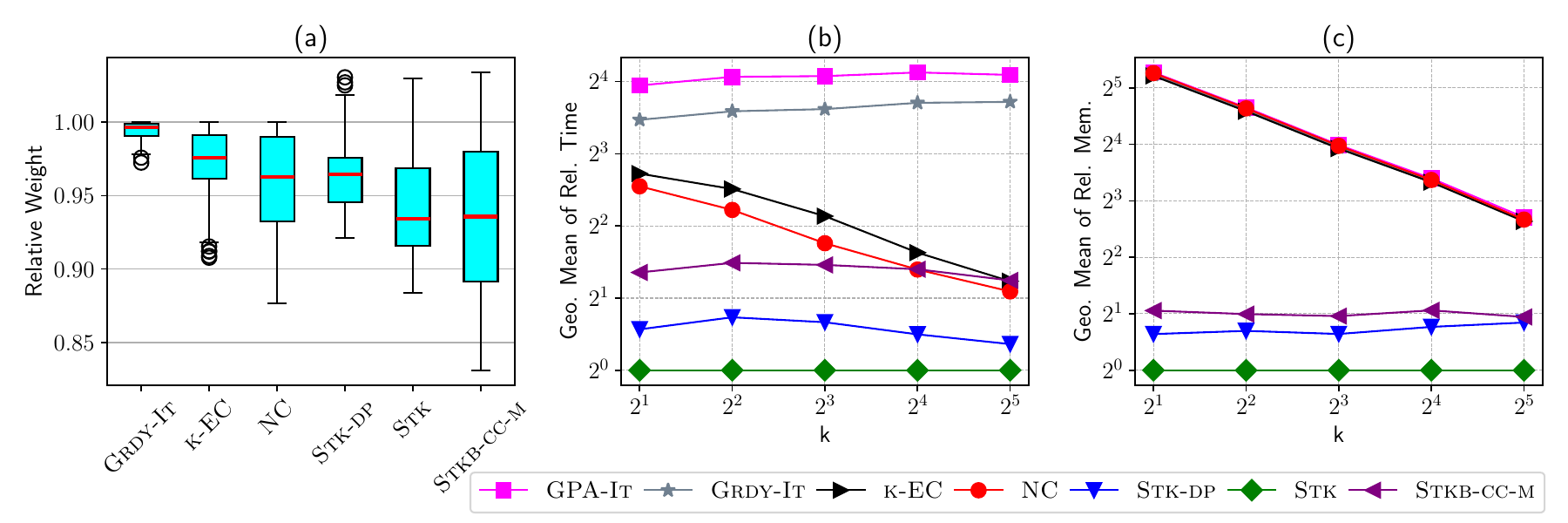}
    \caption{Summary plots for the streaming and offline algorithms on \gfive{} graphs. (a) Boxplot of relative weights across all instances and $k$ values for each algorithm. We set $\varepsilon = 0.001$ for \stk{}, \stkdp{} and \stkb{}. (b) Geometric mean of relative time and (c) geometric mean of relative memory,  across all instances with increasing $k$ values.  \gpait{} is the baseline for relative weight, and \stk{} is the baseline algorithm for relative time and memory.
Note the logarithmic scales in the axes of the last two subplots.}
    \label{fig:g500-summary}
\end{figure}

\begin{figure}
    \centering
    \includegraphics[width=0.9\linewidth]{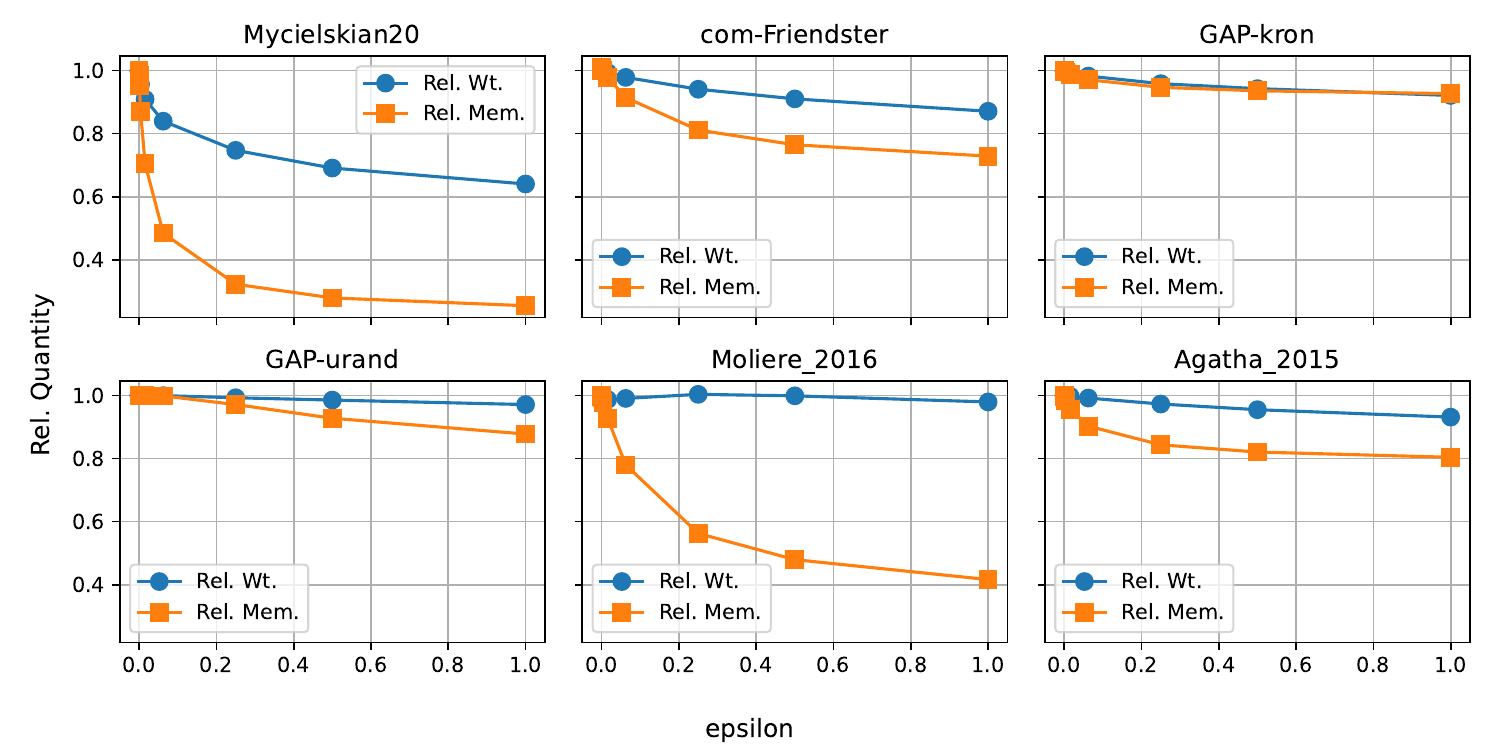}
    \caption{Relative weights and memory of the \tlarge{} graphs with varying $\varepsilon$ using \stkdp{}, with $k=8$. 
    The quantities are relative to $\varepsilon = 0$ values, and we test with $\varepsilon \in \{0,2^{-x}\}$, where $x \in \{16,14,12,10,8,6,4,2,1,0\}$.}
    \label{fig:eps-effect}
\end{figure}

\begin{figure}[t]
    \centering
    \includegraphics[width=0.95\textwidth]{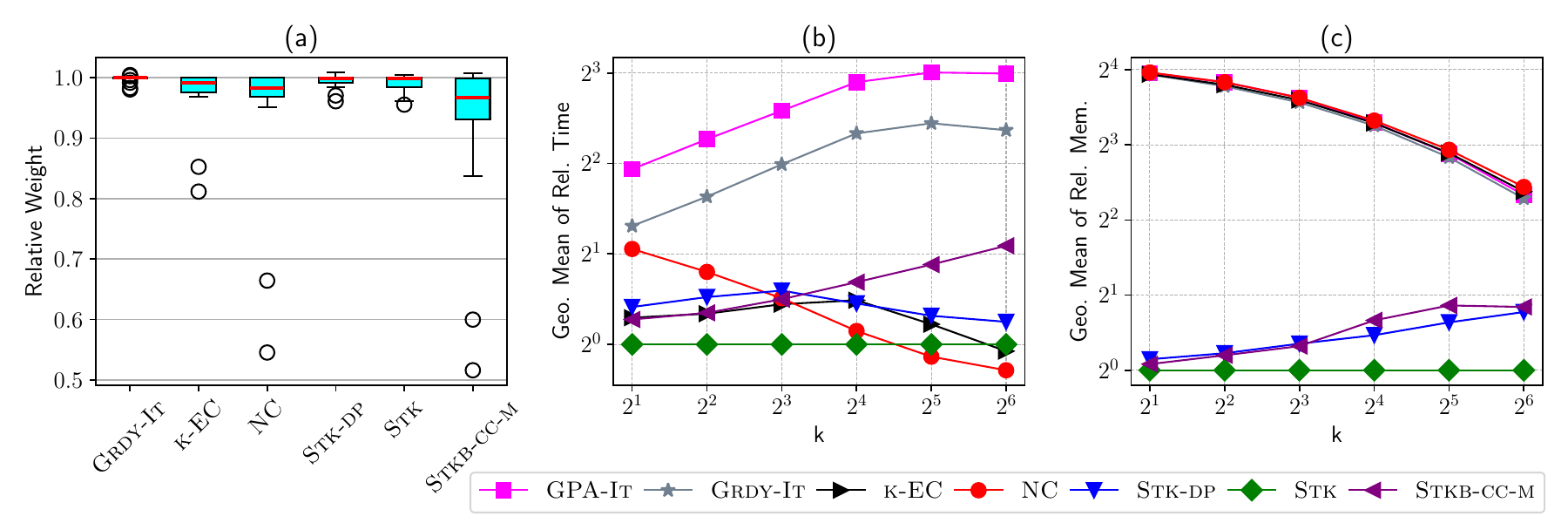}
    \caption{Summary plots the streaming and offline algorithms on HPC \tdata{} dataset. (a) Boxplot of relative weights across all instances and $k$ values for each algorithm, (b) Geometric mean of relative time, and (c) geometric mean of relative memory across all instances with increasing $k$ values.  We set $\varepsilon = 0.001$ for \stk{}, \stkdp{} and \stkb{}.
    \gpait{} is the baseline for relative weight, and \stk{} is the baseline algorithm for relative time and memory.
    Note the logarithmic scales in the axes of the last two subplots.}
    \label{fig:dcn-hpc-summary}
\end{figure}

\begin{figure}[t]
    \centering
    \includegraphics[width=0.95\textwidth]{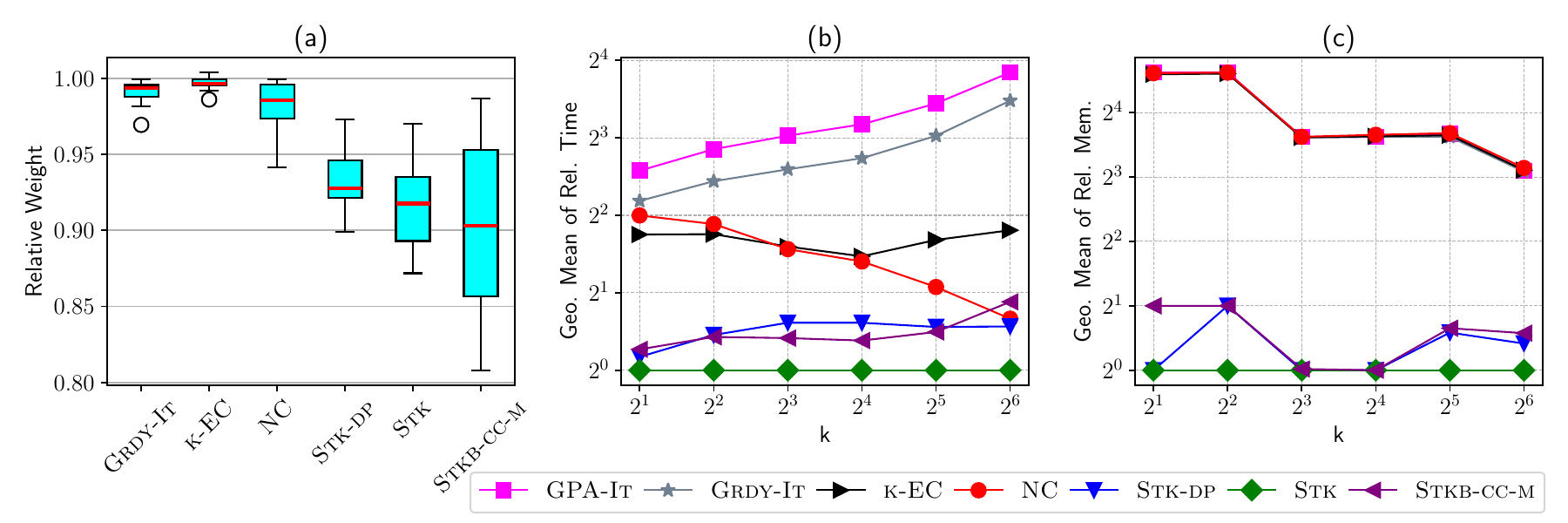}
    \caption{Summary plots the streaming and offline algorithms on pFabric \tdata{} dataset. (a) Boxplot of relative weights across all instances and $k$ values for each algorithm, (b) Geometric mean of relative time, and (c) geometric mean of relative memory across all instances with increasing $k$ values.  We set $\varepsilon = 0.001$ for \stk{}, \stkdp{} and \stkb{}. \gpait{} is the baseline for relative weight, and \stk{} is the baseline algorithm for relative time and memory. Note the logarithmic scales in the axes of the last two subplots.}
    \label{fig:dcn-pFabric-summary}
\end{figure}

\end{document}